\def\s{\mathbf{s}}
\def\bX{\mathbf{X}}
\def\RR{\mathcal{R}}
\def\R{\mathbb{R}}
\def\C{\mathbb{C}}
\def\bW{\mathbf{W}}
\def\bV{\mathbf{V}}
\def\bH{\mathbf{H}}
\def\bU{\mathbf{U}}
\def\bQ{\mathbf{Q}}
\def\bP{\mathbf{P}}
\def\y{\mathbf{y}}
\def\x{\mathbf{x}}
\def\s{\mathbf{s}}
\def\h{\mathbf{h}}
\def\bF{\mathbf{F}}
\def\bA{\mathbf{A}}
\def\bZ{\mathbf{Z}}
\def\bD{\mathbf{D}}
\def\bH{\mathbf{H}}
\def\bG{\mathbf{G}}
\def\bh{\mathbf{h}}
\def\bthe{\boldsymbol{\Theta}}
\def\bw{\mathbf{w}}
\def\u{\mathbf{u}}
\def\bw{\mathbf{w}}
\def\bY{\mathbf{Y}}
\newtheorem{theorem}{Theorem}
\newtheorem{lemma}{Lemma}
\newtheorem{remark}{Remark}
\newtheorem{proposition}{Proposition}
\newtheorem{corollary}{Corollary}
\newtheoremstyle{noparens}%
  {}{}%
  {\itshape}{}%
  {\bfseries}{.}%
  { }%
  {\thmname{#1}\thmnumber{ #2}\mdseries\thmnote{ #3}}
\theoremstyle{noparens}
\title{Microwave Linear Analog Computer (MiLAC)-aided Multiuser MISO: Fundamental Limits and Beamforming Design}
\author{\IEEEauthorblockN{Zheyu Wu, Matteo Nerini, and  Bruno Clerckx}
  	\thanks{Z. Wu, M. Nerini, and B. Clerckx are with the Department of Electrical and Electronic Engineering, Imperial College London, London, SW7 2AZ, U.K. (email: \{zheyu.wu, m.nerini20, b.clerckx\}@imperial.ac.uk).}
  }
\date{today}
\begin{document}
\maketitle
\begin{abstract}
As wireless communication systems evolve toward the 6G era, ultra-massive/gigantic multiple-input multiple-output (MIMO)
is envisioned as a key enabling technology. Recently, microwave linear analog computer (MiLAC) has emerged as a promising approach to realize beamforming entirely in the analog domain, thereby alleviating the scalability challenges associated with gigantic MIMO. In this paper, we investigate the fundamental beamforming flexibility and design of lossless and reciprocal MiLAC-aided beamforming for multiuser multiple-input single-output (MISO) systems. We first provide a rigorous characterization of the set of beamforming matrices achievable by MiLAC. Based on this characterization, we prove that MiLAC-aided beamforming does not generally achieve the full flexibility of digital beamforming, while offering greater flexibility than conventional phase-shifter-based analog beamforming. Furthermore, we propose a hybrid digital–MiLAC architecture and show that it achieves digital beamforming flexibility when the number of radio frequency (RF) chains equals the number of data streams, halving that required by conventional hybrid beamforming.
We then formulate the MiLAC-aided sum-rate maximization problem for MU-MISO systems. To solve the problem efficiently,  we reformulate the MiLAC-related constraints as a convex linear matrix inequality and establish a low-dimensional subspace property that significantly reduces the problem dimension. Leveraging these results, we propose weighted minimum mean-square error (WMMSE)-based algorithms for solving the resulting problem.  Simulation results demonstrate that MiLAC-aided beamforming achieves performance close to that of digital beamforming in gigantic MIMO systems. Compared with hybrid beamforming, it achieves comparable or superior performance with lower hardware and computational complexity by avoiding symbol-level digital processing and enabling low-resolution digital-to-analog converters (DACs). 
\end{abstract}
\begin{IEEEkeywords}
Beamforming, gigantic multiple-input multiple-output (MIMO), microwave linear analog computer (MiLAC), multiuser multiple-input single-output (MU-MISO), weighted minimum mean-square error (WMMSE).
\end{IEEEkeywords}

\section{Introduction}

Future sixth-generation (6G) wireless communication systems are expected to evolve toward ubiquitous coverage, extreme data rates, massive connectivity, and ultra-low latency \cite{6G_survey}.  To meet these growing demands, wireless systems are being driven from conventional sub-6 GHz bands toward upper mid-band frequencies (7-24 GHz), where wider bandwidths can be exploited to enhance system capacity \cite{6G_emil}. Such a frequency shift motivates the deployment of extremely large-scale antenna arrays, which are essential for compensating increased path loss and providing sufficient beamforming gain and high spatial multiplexing capability. This gives rise to the concept of ultra-massive multiple-input multiple-output (MIMO)/gigantic MIMO \cite{6G_emil, qualcomm_6g_vision_whitepaper}. In gigantic MIMO systems, the number of transmit antennas can scale to the order of hundreds or even thousands, far exceeding that of conventional massive MIMO deployed in 5G systems \cite{5G_Larsson}.

However, the extreme antenna scaling in gigantic MIMO poses significant practical challenges to conventional fully digital MIMO architectures, in which each antenna element is connected to a dedicated radio-frequency (RF) chain to perform digital beamforming, as illustrated in Fig. \ref{scheme2} (a). Due to power-hungry components in each RF chain, including high-resolution digital-to-analog converters (DACs) and power amplifiers (PAs), the hardware cost and energy consumption become prohibitive as the number of antennas grows.

Various approaches have been proposed to address this issue. One line of work considers employing very low-resolution DACs, e.g., one-bit DACs, at the RF chains. However, the resulting coarse quantization typically leads to severe performance degradation \cite{ZF,SEPlinear}.
 Another widely adopted solution is hybrid digital-analog beamforming \cite{el2014spatially,hybrid1,hybrid2}. In this architecture, a limited number of RF chains are used to implement a low-dimensional digital beamformer, which is followed by an analog beamforming stage realized using phase shifters; see Fig. \ref{scheme2} (b).  Hybrid beamforming can achieve the same beamforming flexibility as fully digital beamforming using significantly fewer RF chains, whose number scales only as twice the total number of data streams and is irrelevant to the number of transmit antennas \cite{hybrid2}. To further reduce the power consumption and improve the energy efficiency of hybrid beamforming, the concept of tri-hybrid MIMO has been recently proposed \cite{triMIMO}. The tri-hybrid architecture combines conventional hybrid beamforming with reconfigurable antenna arrays \cite{shlezinger2021dynamic,fluid,zhu2023movable}
, which serve as a complementary layer to perform beamforming  in the electromagnetic (EM) domain.


 Very recently, the concept of microwave linear analog computer (MiLAC) has been proposed in \cite{part1,part2}. A MiLAC is a multiport microwave network composed of numerous tunable impedance components. By tuning those impedances, the network is fully reconfigurable and can  implement a broad class of linear transformations directly in the analog domain.  For example, it has been shown in \cite{part1} that MiLAC enables the computation of linear minimum mean-square error (LMMSE) operators with a computational complexity that scales quadratically with the problem dimension, rather than cubically as in conventional digital implementations.  Beyond the computational capabilities, MiLACs also exhibit strong potential in wireless communication systems, particularly in gigantic MIMO scenarios. As illustrated in Fig. \ref{scheme2} (c), a MiLAC can be deployed at the transmitter, where its input ports are connected to the RF chains and its output ports are connected to the transmit antennas. By feeding the data symbols into the input ports, the MiLAC performs beamforming entirely in the analog domain as the signals propagate through the microwave network.
\begin{figure}
\includegraphics[width=0.5\textwidth]{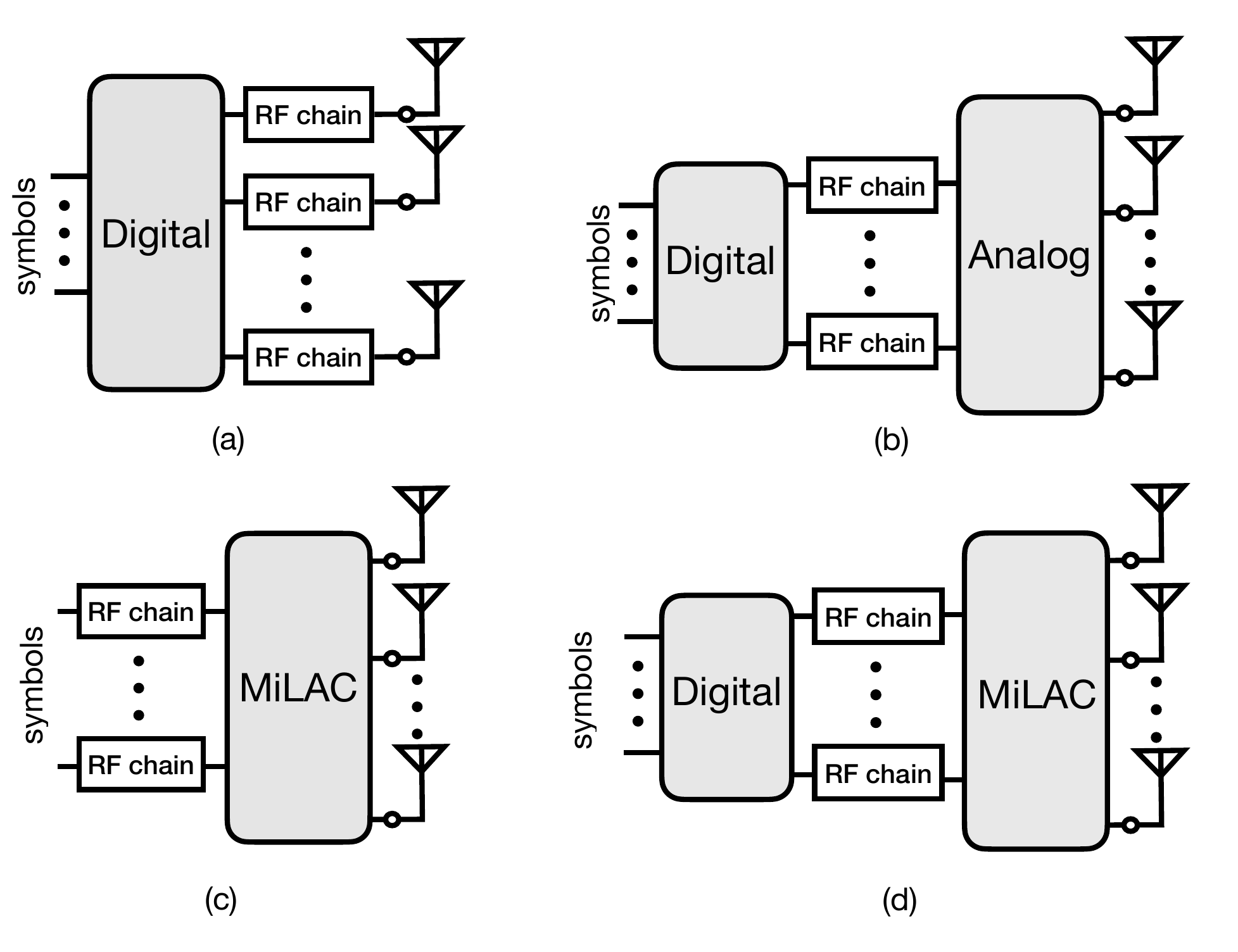}
\centering
\caption{Different beamforming schemes, where (a) digital beamforming, (b) hybrid beamforming,  (c) MiLAC-aided beamforming, and (d) hybrid digital-MiLAC beamforming.}
\label{scheme2}
\end{figure}

As discussed in \cite{part2}, MiLAC-aided beamforming provides significant advantages in terms of hardware cost and computational complexity.  First, the required number of RF chains equals the number of transmitted data symbols, which is  much smaller than the number of antennas.
 Second, it suffices to employ low-resolution DACs at the RF chains, since the RF chains directly carry the constellation symbols. Third, beamforming is performed entirely in the analog domain, thereby eliminating per-symbol digital processing. Fourth, MiLAC-aided beamforming enables the efficient implementation of specific beamforming matrices, such as zero-forcing beamforming, with substantially reduced computational complexity.

The early works on MiLAC \cite{part1,part2} adopt an idealized assumption that the microwave network can be arbitrarily reconfigured. Under this assumption, MiLAC-aided beamforming was shown to achieve the same flexibility as digital beamforming, highlighting its strong potential as a fully analog beamforming architecture. However, in practice, microwave networks are subject to a variety of physical constraints, which makes such idealized assumptions unrealistic. In particular, the microwave network is typically assumed to be lossless and reciprocal. The lossless constraint ensures that the network does not dissipate (or amplify) signal power, while the reciprocity constraint implies symmetric signal transmission between any pair of ports, which facilitates the implementation of microwave networks and avoids the need of non-reciprocal devices. Taking these practical constraints into account, it was shown in \cite{MIMOcapacity} that, for point-to-point MIMO systems, MiLAC-aided beamforming can still achieve the same channel capacity as fully digital beamforming. Furthermore, it was demonstrated in \cite{MIMOcapacity2} that the number of tunable impedance components can be substantially reduced while preserving the capacity-achieving property.

Despite these progress, the fundamental limits and optimization of MiLAC-aided beamforming in multiuser systems remain largely unexplored.  In this paper, we investigate the fundamental flexibility and beamforming design of a lossless and reciprocal MiLAC in multiuser systems. 
The main contributions are summarized as follows.

1) \emph{Characterization of the fundamental flexibility of MiLAC-aided beamforming.} We characterize the fundamental flexibility of MiLAC-aided beamforming by deriving a compact and explicit expression for the set of achievable beamforming matrices. This characterization enables a rigorous comparison between MiLAC-aided beamforming and classical beamforming schemes, e.g.,  digital beamforming and hybrid beamforming, and provides key insights for the design of MiLAC-aided beamforming algorithms.

2) \emph{Rigorous comparisons with classical beamforming schemes.}  We show that MiLAC-aided beamforming does not, in general, attain the full flexibility of digital beamforming; in particular, it cannot realize arbitrary beamforming matrices with highly correlated beamforming vectors. However, in gigantic MIMO systems,  the number of antennas is extremely large and user channels become asymptotically orthogonal, thus MiLAC-aided beamforming achieves performance comparable to that of digital beamforming. Furthermore, we show that MiLAC-based analog beamforming provides greater flexibility than conventional phase-shifter-based analog beamforming, indicating that replacing the phase-shifter-based analog part in classical hybrid beamforming architectures with a MiLAC leads to enhanced design flexibility.

3) \emph{A digital-achieving hybrid digital-MiLAC beamforming}. We show that a hybrid digital–MiLAC beamforming scheme, in which the analog beamforming stage is implemented using a MiLAC as illustrated in Fig. \ref{scheme2} (d), achieves the same flexibility as digital beamforming with only  $K$ RF chains, where $K$ is the number of served users. This  reduces the RF-chain requirement compared to classical hybrid beamforming architectures, which require $2K$ RF chains to achieve full flexibility \cite{hybrid2}.

4) \emph{Efficient MiLAC-aided beamforming strategies}. We formulate the MiLAC-aided sum-rate maximization problem for multiuser multi-input single-output (MU-MISO) systems. We show that the set of MiLAC-aided beamforming matrices can be equivalently characterized by a convex linear matrix inequality (LMI). In addition, we establish a desirable low-dimensional subspace property of the formulated problem, substantially reducing the problem dimension.  Building on this structure, we propose a weighted minimum mean-square error (WMMSE) algorithm, where the resulting beamforming subproblem takes the form of a quadratic semidefinite program. To further reduce the computational cost, we propose a low-complexity algorithm that admits closed-form updates and eliminates the need to handle any positive semidefinite constraints.

Simulation results demonstrate that MiLAC-aided beamforming achieves performance close to fully digital beamforming, and achieves comparable or superior performance compared to conventional hybrid beamforming in multiuser systems. In particular, MiLAC-aided beamforming exhibits more pronounced performance advantages in the low-SNR regime and with a large number of transmit antennas (which is the case of gigantic MIMO). Beyond performance, MiLAC-aided beamforming avoids symbol-level digital processing, such as per-symbol matrix–vector multiplications, and enables the use of low-resolution DACs, thereby achieving reduced hardware and computational complexity compared with hybrid beamforming.

During the final preparation of this manuscript, a preprint \cite{fang2026} appeared, which independently studies MiLAC-aided beamforming in multiuser scenarios and arrives at similar conclusions regarding its limited flexibility compared to digital beamforming (using a different approach). Beyond  \cite{fang2026}, our work further provides: 1) a compact characterization of the achievable beamforming matrix set; 2) comparisons with phase-shifter-based analog beamforming and hybrid beamforming; 3) analysis of a hybrid digital–MiLAC scheme; and 4) efficient algorithmic designs leveraging the derived set and the special problem structure.

\emph{Organization}: The rest of the paper is organized as follows. In Section~\ref{sec:model}, we introduce the MiLAC-aided system model and formulate the corresponding beamforming design problem. Section~\ref{sec:milac} characterizes the flexibility of MiLAC-aided beamforming and compares it with classical beamforming schemes. In Section~\ref{sec:4}, we develop efficient algorithms to solve the MiLAC-aided beamforming design problem. Section~\ref{sec:simulation} presents simulation results to demonstrate the effectiveness of the proposed algorithms and provides a comprehensive comparison of fully digital, MiLAC-aided, and hybrid beamforming schemes. Finally, Section~\ref{sec:conclusion} concludes the paper.

\emph{Notations:}
Throughout the paper, $\mathbb{C}$ and $\mathbb{R}$ denote the complex space and the real space, respectively. Scalars, column vectors, matrices, and sets are denoted by
 $x$, $\x$, $\mathbf{X}$, and $\mathcal{X}$, respectively. For a matrix $\mathbf{X}$,  $[\mathbf{X}]_{\mathcal{I}_1,\mathcal{I}_2}$ denotes its submatrix formed by the rows indexed by $\mathcal{I}_1$ and the columns indexed by $\mathcal{I}_2$. For simplicity, $X_{i,j}$ is also used to denote the $(i,j)$-th element of $\bX$. In particular, for a vector $\x$, $x_k$ denotes its $k$-th element. The notation $\bX^{\frac{1}{2}}$ represents the square root of a positive semidefinite matrix $\bX$, and $\text{tr}(\bX)$ denotes the trace of $\bX$. For a vector $\x$, $\text{diag}(\x)$ returns a diagonal matrix whose diagonal entries are $\x$.
 The operators $(\cdot)^T$, $(\cdot)^H$, $(\cdot)^{-1}$, and $\RR(\cdot)$ return the transpose, the Hermitian transpose, the inverse, and the real part of their corresponding argument, respectively. The notation  $\|\cdot\|_2$  refers to the  $\ell_2$-norm of a vector or the spectral norm of a matrix, and $\|\cdot\|_F$ denotes the Frobenius norm of a matrix.  Given two Hermitian matrices $\bX$ and $\bY$, $\bX\succeq\bY$ means that $\bX-\bY$ is positive semidefinite; in particular, $\bX\succeq \mathbf{0}$ indicates that $\bX$ is positive semidefinite. The inner product of two matrices $\bX$ and $\bY$ is defined as  $\left<\bX,\bY\right>=\RR(\text{tr}(\bX^H\bY))$. The operator $\circ$ denotes the Hadamard (element-wise) product.   The symbols $\mathbf{I}$ and $\mathbf{0}$  refer to the identity matrix and the all-zero matrix, respectively, with dimensions specified by subscripts when necessary.
\section{System Model and Problem Formulation}\label{sec:model}
 \begin{figure}
\includegraphics[width=0.5\textwidth]{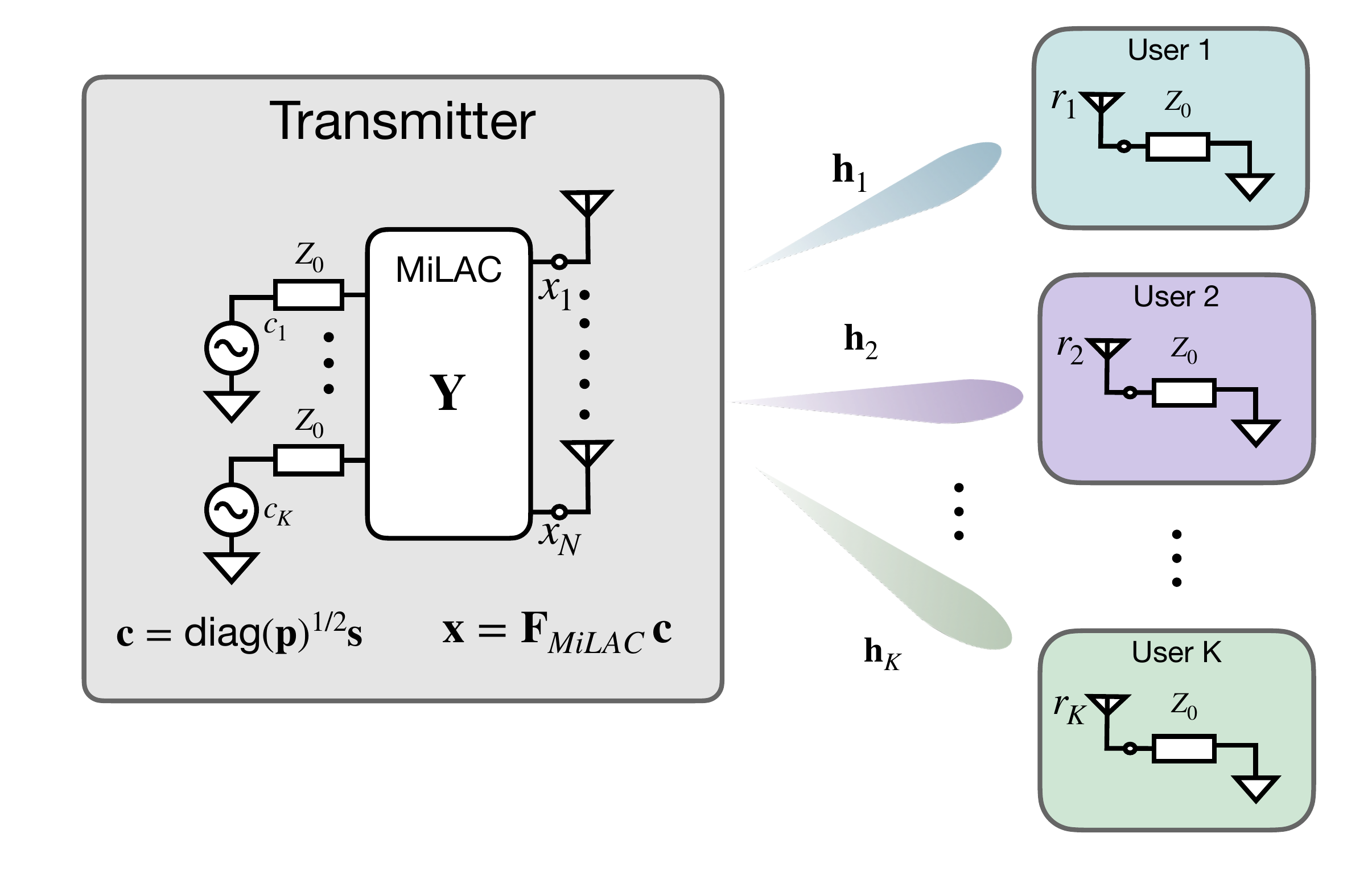}
\centering
\vspace{-0.8cm}
\caption{A MiLAC-aided multiuser MISO system.}
\label{fig:model}
\end{figure}
\subsection{System Model}
Consider a MiLAC-aided multiuser MISO system, where an $N$-antenna transmitter severs $K$ single-antenna users with the help of a MiLAC, as shown in Fig. \ref{fig:model}. The transmitter-side hardware architecture follows that in \cite{part1,part2,MIMOcapacity,MIMOcapacity2}, which consists of $K$ RF chains used to transmit $K$ information symbols in parallel and a $(K\times N)$-port MiLAC that further precodes the signal. Each RF chain is modeled as a voltage source with a series impedance of $Z_0=50\,\Omega$, and the MiLAC is modeled as a linear microwave network comprising multiple tunable admittance (impedance) components; see Fig. \ref{milac_arch} for an example of a four-port MiLAC.

 Let $\mathbf{s}=[s_1,s_2,\dots,s_K]^T$ denote the symbol vector for the users, where $\mathbb{E}[\s\s^H]=\mathbf{I}_K$ and  $s_k$ is the symbol for user $k$, and let $\mathbf{p}=[p_1,p_2,\dots,p_K]\in\R^{K}$ denote the power allocation vector, where $p_k\geq 0$ is the power allocated to symbol $s_k$.    The source signal at the RF chains is given by 
$\mathbf{c}=\text{diag}(\mathbf{p})^{\frac{1}{2}}\mathbf{s}.$ The signal $\mathbf{c}$ is then fed into the MiLAC, and we denote the output  signal by $\x=\mathbf{F}_{\text{MiLAC}}\,\mathbf{c}.$  Here, $\mathbf{F}_{\text{MiLAC}}\in\C^{N\times K}$ captures   the input–output relationship of the MiLAC. We relegate a detailed discussion to Section \ref{sec:milacmodeling}.

The receive signal at the $k$-th user then reads
$$r_k=\h_k^H\mathbf{F}_{\text{MiLAC}}\,\text{diag}(\mathbf{p})^{\frac{1}{2}}\s+n_k,$$
where $\h_k\in\C^{N}$ is the channel between the transmitter and the $k$-th user, and $n_k\sim\mathcal{CN}(0,\sigma_n^2)$ is the additive white Gaussian noise (AWGN).
\vspace{-0.3cm}
\subsection{MiLAC Modeling}\label{sec:milacmodeling}
 In this paper, we focus on a fully-connected MiLAC architecture as in \cite{part1,part2,MIMOcapacity}, where each port is connected to a tunable admittance to ground, and every pair of ports is interconnected via a tunable admittance.

   \begin{figure}
\includegraphics[width=0.3\textwidth]{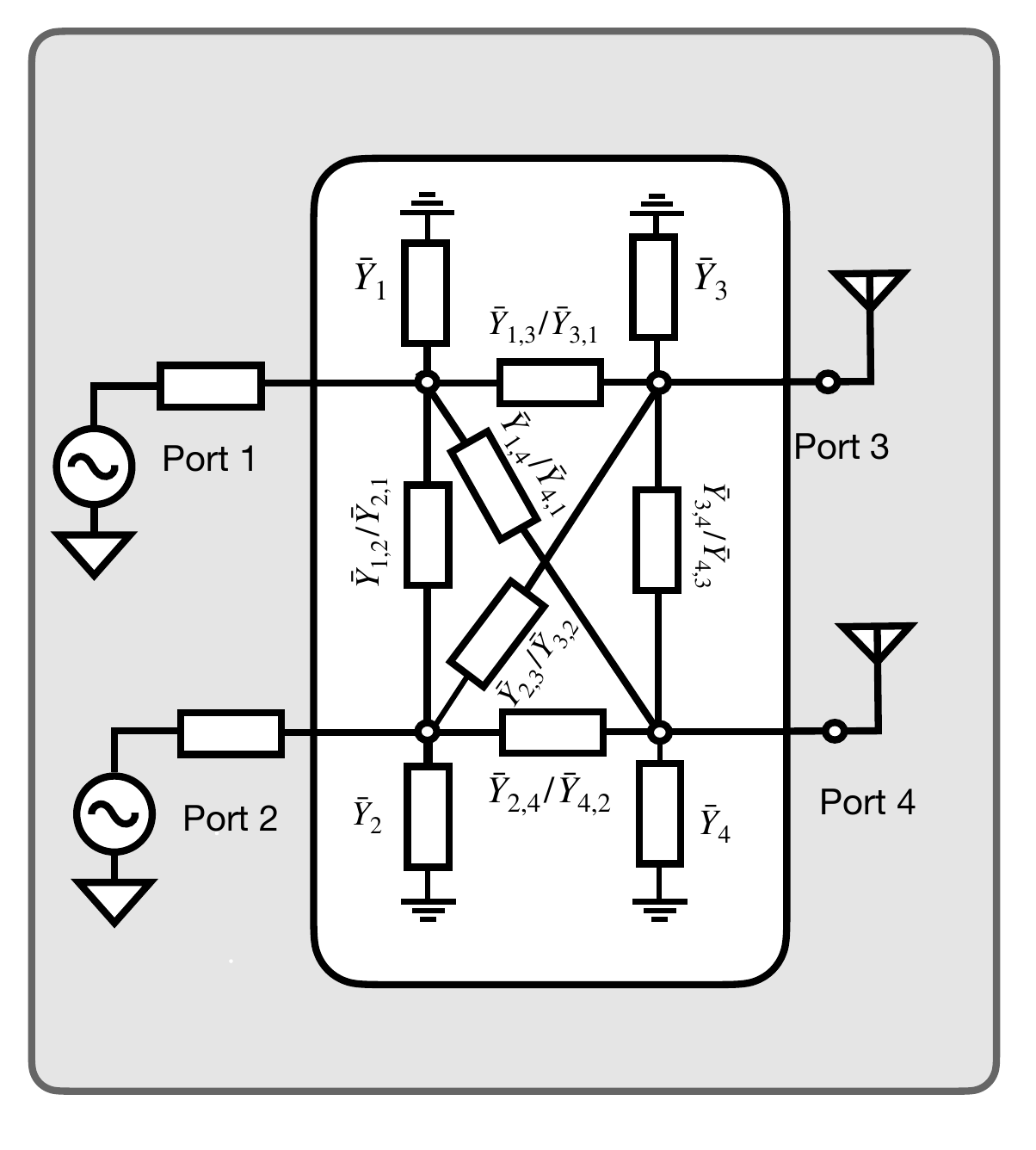}
\centering
\vspace{-0.3cm}
\caption{An example of a four-port MiLAC with $N=K=2$.}
\label{milac_arch}
\end{figure}
 According to \cite{part1,part2}, $\bF_{\text{MiLAC}}$ is a function of the MiLAC admittance matrix $\bY\in\C^{(N+K)\times (N+K)}$, given by
\begin{equation}\label{Fmilac:Y}
\mathbf{F}_{\text{MiLAC}}=\left[\left(\mathbf{I}_{N+K}+{Z_0\mathbf{Y}}\right)^{-1}\right]_{K+1:K+N,1:K}.
\end{equation}
The admittance matrix $\bY$ is  reconfigurable and can be controlled by tuning the admittances in the network. Let $\bar{Y}_n$ be the admittance connecting  the $n$-th port of the network to ground, and let $\bar{Y}_{n,j}$ denote the admittance connecting the $n$-th and $j$-th ports.  The admittance $\bY$, according to multiport network theory \cite{microwavebook},   can be expressed as 
\begin{equation}\label{admittance}
[\bY]_{n,m}=\left\{
\begin{aligned}
&-\bar{Y}_{n,m},~~~~~~~~~\,~~&\text{if}~n\neq m;\\
&\bar{Y_n}+\textstyle\sum_{j\neq n}\bar{Y}_{n,j},~~&\text{if }n=m. 
\end{aligned}\right.
\end{equation}
 
  In addition to the admittance matrix $\bY$, the MiLAC can also be characterized by the scattering matrix $\bthe$ of the microwave network \cite{microwavebook}, which relates to $\bY$ through 
  \begin{equation}\label{Eq:theta_Y}
  \bthe=(\mathbf{I}_{N+K}+Z_0\bY)^{-1}(\mathbf{I}_{N+K}-Z_0\bY).
  \end{equation}As shown in \cite{MIMOcapacity,MIMOcapacity2}, expressing $\mathbf{F}_{\text{MiLAC}}$ in terms of $\bthe$ yields a clearer formulation as follows 
   \begin{equation}\label{F}
\bF_{\text{MiLAC}}=\frac{1}{2}[\bthe]_{K+1:K+N,1:K},
\end{equation}
which is obtained by rewriting \eqref{Eq:theta_Y} as $\bthe=-\mathbf{I}_{N+K}+2(\mathbf{I}_{N+K}+Z_0\bY)^{-1}$ and using \eqref{Fmilac:Y}.

\subsubsection{MiLAC without practical constraints}  The works \cite{part1,part2} consider a MiLAC without physical constraints, i.e.,  the  tunable admittances $\{\bar{Y}_i\}$ and $\{\bar{Y}_{i,j}\}_{i\neq j}$ are assumed to take arbitrary complex values.  Under this assumption,  no constraints are imposed on the admittance $\bY$, or equivalently on the scattering matrix
$\bthe$, due to the relationship in \eqref{admittance} and \eqref{Eq:theta_Y}. Consequently, the MiLAC can implement any beamforming matrix \cite{part2}. 

However, the above idealized assumption raises several practical concerns. First, admittance components with non-zero real parts correspond to either loss or gain; loss leads to power dissipation, whereas gain requires active components and incurs additional power consumption. Second, admittance components in microwave networks are typically reciprocal. In light of these considerations, it is more practical to assume that the admittance components in MiLAC are lossless and reciprocal \cite{MIMOcapacity,MIMOcapacity2}.
 \subsubsection{Lossless and reciprocal MiLAC}
In a lossless and reciprocal MiLAC, all tunable admittances $\{\bar{Y}_i\}$ and $\{\bar{Y}_{i,j}\}$ are  purely imaginary and satisfy $\bar{Y}_{i,j}=\bar{Y}_{j,i}$ for $i\neq j$. Consequently, the admittance matrix $\bY$ is a purely imaginary and symmetric matrix. Equivalently, according to microwave network theory, the scattering matrix  $\bthe$ is unitary and symmetric, i.e., 
\begin{equation}\label{theta}
\bthe=\bthe^T,~\bthe^H\bthe=\mathbf{I}_{N+K}.
\end{equation}  Throughout the paper, we focus on a lossless and reciprocal MiLAC. 
Unless otherwise specified, the term \emph{MiLAC} in the remainder of the paper refers to a lossless and reciprocal MiLAC.


\subsection{Problem Formulation}
 We refer to the beamforming schemes supported by the architecture in Fig. \ref{fig:model} with a lossless and reciprocal MiLAC as MiLAC-aided beamforming, and define the corresponding MiLAC-aided beamforming matrix set,
 under a maximum transmit power  $P_T$,  as:
\begin{equation}\label{Wmilac0}
\begin{aligned}
\mathcal{W}_{\text{MiLAC}}=\big\{\bW&=\mathbf{F}\text{diag}(\mathbf{p})^{\frac{1}{2}}\mid\bF=[\bthe]_{K+1:K+N,1:K},\\
&\bthe=\bthe^T,~\bthe^H\bthe=\mathbf{I}_{N+K},~\mathbf{1}^T\mathbf{p}\leq P_T\big\},
\end{aligned}
\end{equation}
where $\bF:=2\bF_{\text{MiLAC}}$ is introduced for notational simplicity\footnote{Rigorously, the set should be defined with $\mathbf{F}_{\text{MiLAC}}$ rather than $\mathbf{F}$, or equivalently,  there should be a factor of ${1}/{2}$ before $\bF$ in the definition of $\mathcal{W}_{\text{MiLAC}}$. For notational simplicity, we omit this factor and absorb it into the channel normalization without loss of generality (see \eqref{sumrate}).}. 
It has been shown in \cite{MIMOcapacity} that $\mathcal{W}_{\text{MiLAC}}$ contains all matrices with transmit power $P_T$ and unitary columns, i.e., 
\begin{equation}\label{unitary}
\{\bW\mid\|\bW\|_F^2=P_T, \bw_i^H\bw_j=0,~\forall~i\neq j\}\subseteq \mathcal{W}_{\text{MiLAC}}.
\end{equation}
Thus, MiLAC is capacity-achieving for point-to-point MIMO systems \cite{MIMOcapacity}. 

Our goal  in this paper is  twofold. First, we aim to provide a complete characterization of $\mathcal{W}_{\text{MiLAC}}$ to gain insights into the class of beamforming matrices that a MiLAC can realize. In particular, we seek to answer the following question: In general multiuser systems, can MiLAC-aided beamforming still achieve the same flexibility as fully digital beamforming? Second, we aim to develop efficient and effective beamforming strategies for MiLAC-aided multiuser MISO systems and to numerically compare their performance against fully digital and hybrid beamforming. Specifically, we focus on the following sum-rate maximization problem:
\begin{equation}\label{sumrate}
\begin{aligned}
\max_{\bW}~&\sum_{k=1}^K\log\left(1+\frac{\frac{1}{4}|\h_k^H\bw_k|^2}{\frac{1}{4}\sum_{j\neq k}|\h_k^H\bw_j|^2+\sigma_n^2}\right)\\
\text{s.t. }~&\bW\in\mathcal{W}_{\text{MiLAC}},
\end{aligned}
\end{equation}
where the factor $\frac{1}{4}$ arises because $\bF=2\bF_{\text{MiLAC}}$.
\section{Characterization of MiLAC-Aided Beamforming and Comparison with Classical Beamforming Schemes}\label{sec:milac}
In this section, we characterize the flexibility of MiLAC-aided beamforming. We first express the class of beamforming matrices achievable by MiLAC-aided beamforming, i.e., $\mathcal{W}_{\text{MiLAC}}$, in a concise and explicit form in Section \ref{sec:3a}. We then compare MiLAC-aided beamforming with fully digital beamforming and phase-shifter-based analog beamforming in Sections \ref{sec:3b} and \ref{sec:3c}, respectively. In particular, we show that MiLAC-aided beamforming generally cannot achieve the full flexibility of fully digital beamforming, but it offers greater flexibility than phase-shifter-based analog beamforming. Finally, in Section \ref{sec:3d}, we discuss a hybrid digital–MiLAC architecture, which provides more flexibility than traditional hybrid beamforming  and  is digital-achieving with $K$ RF chains.
\subsection{Beamforming Matrices of MiLAC-Aided Beamforming}\label{sec:3a}
According to \eqref{F}, the MiLAC affects its output signal $\x$ only though the submatrix $[\bthe]_{K+1:K+N,1:K}$. In the following proposition, 
we translate the unitary and symmetric constraints \eqref{theta} on $\bthe$  into the corresponding constraints on the submatrix  $[\bthe]_{K+1:K+N,1:K}$.
 \begin{proposition}\label{pro1}
First, let $\bthe\in\C^{(N+K)\times (N+K)}$ be a unitary and symmetric matrix, then $\mathbf{F}:=[\bthe]_{K+1:K+N,1:K}$ satisfies  $$\|\mathbf{F}\|_2\leq 1.$$ Second, for any matrix $\mathbf{F}\in\C^{N\times K}$ satisfying $\|\bF\|_2\leq 1$, there exists matrices $\bthe_{11}\in\C^{N\times N}$ and $\bthe_{22}\in\C^{K\times K}$ such that 
\begin{equation}\label{construct:theta}
\bthe=\left[\begin{matrix}\bthe_{11}&\bF^T\\\bF&\bthe_{22}\end{matrix}\right]
\end{equation} is unitary and symmetric.\end{proposition}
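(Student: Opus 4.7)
The first statement follows from the general fact that any submatrix of a unitary matrix has spectral norm at most one. Writing $\bthe$ in block form with its $(2,1)$-block equal to $\bF$ and its $(1,1)$-block denoted $\bA\in\C^{K\times K}$, orthonormality of the first $K$ columns of $\bthe$ (a consequence of $\bthe^H\bthe=\bI_{N+K}$) yields $\bA^H\bA+\bF^H\bF=\bI_K$, and therefore $\bF^H\bF\preceq\bI_K$, i.e., $\|\bF\|_2\le 1$. Note that symmetry of $\bthe$ is not required for this direction.

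The second statement is existential and calls for an explicit construction. My plan is to build $\bthe_{11}$ and $\bthe_{22}$ from a full singular value decomposition $\bF=\bU\boldsymbol{\Sigma}\bV^H$, with $\bU\in\C^{N\times N}$, $\bV\in\C^{K\times K}$ unitary and $\boldsymbol{\Sigma}\in\R^{N\times K}$ carrying the singular values $\sigma_i\in[0,1]$ of $\bF$ on its main diagonal (the upper bound holds by hypothesis). I would then set
\begin{equation*}
\bthe_{11}=\bV^*(\bI_K-\boldsymbol{\Sigma}^T\boldsymbol{\Sigma})^{1/2}\bV^H,\qquad \bthe_{22}=-\bU(\bI_N-\boldsymbol{\Sigma}\boldsymbol{\Sigma}^T)^{1/2}\bU^T,
\end{equation*}
and assemble $\bthe$ as in \eqref{construct:theta}. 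Symmetry is immediate: using $(\bV^*)^T=\bV^H$ and $(\bU^T)^T=\bU$, each diagonal block has the form $\bQ\bD\bQ^T$ for some matrix $\bQ$ and real diagonal $\bD$, while the off-diagonal blocks are $\bF$ and $\bF^T$ by construction, so $\bthe=\bthe^T$.

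Unitarity $\bthe\bthe^H=\bI_{N+K}$ then decomposes into three block identities: $\bthe_{11}\bthe_{11}^H+\bF^T\overline{\bF}=\bI_K$, $\bF\bF^H+\bthe_{22}\bthe_{22}^H=\bI_N$, and the cross identity $\bthe_{11}\bF^H+\bF^T\bthe_{22}^H=\mathbf{0}$. The two diagonal identities follow by direct substitution using the trivial decomposition $\bI=(\bI-\boldsymbol{\Sigma}^T\boldsymbol{\Sigma})+\boldsymbol{\Sigma}^T\boldsymbol{\Sigma}$ together with the conjugated unitarity relations $\bU^T\overline{\bU}=\bI_N$ and $\bV^*\bV^T=\bI_K$ (both obtained by taking complex conjugates of $\bU\bU^H=\bI$ and $\bV\bV^H=\bI$).

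I expect the cross identity to be the only delicate step. After substituting the SVD, it reduces to showing that the two diagonal matrices $(\bI_K-\boldsymbol{\Sigma}^T\boldsymbol{\Sigma})^{1/2}\boldsymbol{\Sigma}^T$ and $\boldsymbol{\Sigma}^T(\bI_N-\boldsymbol{\Sigma}\boldsymbol{\Sigma}^T)^{1/2}$ are equal; both have entries $\sigma_i\sqrt{1-\sigma_i^2}$, so the equality holds entrywise. The minus sign prefixed to $\bthe_{22}$ is then precisely what produces the required cancellation between the two resulting terms. In my view, the main conceptual obstacle is recognizing at the outset that a generic unitary dilation of a contraction (such as Halmos's classical construction) is not symmetric; one must restrict $\bthe_{11}$ and $\bthe_{22}$ to the symmetric-factorization ansatz $\bQ\bD\bQ^T$ from the start, after which the diagonal entries of $\bD$ and the overall sign are essentially forced by the three block identities above.
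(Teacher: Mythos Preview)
Your proposal is correct and takes essentially the same approach as the paper: both arguments establish the first direction via the elementary fact that submatrices of unitary matrices are contractions, and both construct the dilation for the second direction from a full SVD of $\bF$ using the ansatz $\bthe_{11}=\pm\bV^*(\bI_K-\boldsymbol{\Sigma}^T\boldsymbol{\Sigma})^{1/2}\bV^H$, $\bthe_{22}=\mp\bU(\bI_N-\boldsymbol{\Sigma}\boldsymbol{\Sigma}^T)^{1/2}\bU^T$. The only cosmetic difference is that the paper places the minus sign on $\bthe_{11}$ while you place it on $\bthe_{22}$; either choice works, since only the relative sign matters for the cross-block cancellation you correctly identify.
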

\begin{proof}
To prove the first result, we notice that given a unitary matrix $\bthe\in\C^{(N+K)\times (N+K)}$, we have $$\left\|\bF\right\|_2=\left\|\left[\begin{matrix}\mathbf{0}_{N\times K}&\mathbf{I}_{N}\end{matrix}\right]\bthe\left[\begin{matrix}\mathbf{I}_{K}\\\mathbf{0}_{N\times K}\end{matrix}\right]\right\|_2\leq \|\bthe\|_2=1.$$

To prove the second result, we notice that given any $\|\bF\|_2\leq 1$, let $\bF=\bU_{F}\bD_{F}\bV_{F}^H$ be the singular value decomposition (SVD) of $\bF$, where $\bU_F\in\mathcal{U}(N)$, $\bV_F\in\mathcal{U}(K)$,  $\bD_F=[\text{diag}(\sigma_1,\sigma_2,\dots,\sigma_r,\mathbf{0}_{(K-r)\times 1}),\mathbf{0}_{K\times (N-K)}]^T\in\C^{N\times K}$, and $r\leq K$ is the rank of $\bF$.
We construct $\bthe_{11}$ and $\bthe_{22}$ as
$$
\bthe_{11}=-\bV_F^*\left[\begin{smallmatrix}\sqrt{1-\sigma_1^2}&&&&\\&\ddots&&&\\&&\sqrt{1-\sigma_r^2}&&\\&&&\mathbf{I}_{K-r}\end{smallmatrix}\right]\bV_F^H$$
and 
$$
\bthe_{22}=\bU_F\left[\begin{smallmatrix}\sqrt{1-\sigma_1^2}&&&\\&\ddots&&\\&&\sqrt{1-\sigma_r^2}&\\&&&\mathbf{I}_{N-r}\end{smallmatrix}\right]\bU_F^T.$$
Obviously, $\bthe_{11}$ and $\bthe_{22}$ are symmetric, and thus $\bthe$ defined in \eqref{construct:theta} is symmetric. Next we prove that $\bthe$ is unitary. Since $\bthe$ is symmetric, it suffices to show that 
\begin{subequations}
\begin{align}
&\bthe_{11}\bthe_{11}^H+(\bF^H\bF)^T=\mathbf{I}_{K},\\
&\bF\bF^H+\bthe_{22}\bthe_{22}^H=\mathbf{I}_{N},\\
&\bF\bthe_{11}^H+\bthe_{22}\bF^*=\mathbf{0}_{N\times K},
\end{align}
\end{subequations}
which are obtained from $\bthe\bthe^H=\mathbf{I}_{N+K}$.
It is straightforward to check that the above conditions are satisfied with the constructed $\bthe_{11}$ and $\bthe_{22}$.
\end{proof}
With Proposition \ref{pro1}, we can equivalently express $\mathcal{W}_{\text{MiLAC}}$ in \eqref{Wmilac0} in a much simpler form as follows. 
\begin{corollary}\label{corollary:wmilac}
The lossless and reciprocal  MiLAC-aided beamforming matrix set defined in \eqref{Wmilac0} can be equivalently expressed as  
\begin{equation}\label{MiLAC2}
\begin{aligned}
\mathcal{W}_{\text{MiLAC}}=\big\{\bW&=\mathbf{F}\text{diag}(\mathbf{p})^{\frac{1}{2}}\mid \|\bF\|_2\leq 1,~\mathbf{1}^T\mathbf{p}\leq P_T\big\}.
\end{aligned}
\end{equation}
\end{corollary}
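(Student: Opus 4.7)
The plan is to establish equality of the two sets by showing inclusion in both directions, using Proposition \ref{pro1} essentially off-the-shelf. The power constraint $\mathbf{1}^T\mathbf{p}\leq P_T$ appears identically in both descriptions and therefore plays no role in the argument; the entire content of the corollary concerns the equivalence between the condition ``$\bF=[\bthe]_{K+1:K+N,1:K}$ for some unitary symmetric $\bthe$'' and the condition ``$\|\bF\|_2\leq 1$''.

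For the forward inclusion $(\subseteq)$, I would take an arbitrary $\bW=\bF\,\mathrm{diag}(\mathbf{p})^{1/2}$ in the set \eqref{Wmilac0}. Then by definition $\bF=[\bthe]_{K+1:K+N,1:K}$ for some unitary and symmetric $\bthe\in\mathbb{C}^{(N+K)\times(N+K)}$, and the first part of Proposition \ref{pro1} immediately gives $\|\bF\|_2\leq 1$. Hence $\bW$ lies in the right-hand set of \eqref{MiLAC2}.

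For the reverse inclusion $(\supseteq)$, I would take $\bW=\bF\,\mathrm{diag}(\mathbf{p})^{1/2}$ with $\|\bF\|_2\leq 1$ and $\mathbf{1}^T\mathbf{p}\leq P_T$. The second part of Proposition \ref{pro1} asserts the existence of $\bthe_{11}\in\mathbb{C}^{N\times N}$ and $\bthe_{22}\in\mathbb{C}^{K\times K}$ (explicitly constructed in the proof of that proposition from the SVD of $\bF$) such that the block matrix $\bthe$ in \eqref{construct:theta} is unitary and symmetric. Reading off the bottom-left block, one has $[\bthe]_{K+1:K+N,1:K}=\bF$, so $\bW$ satisfies every constraint in the original definition \eqref{Wmilac0}.

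There is no serious obstacle here: the corollary is genuinely a direct repackaging of Proposition \ref{pro1}, and essentially all of the technical work (the construction of a unitary symmetric completion of a contraction $\bF$) has already been carried out in that proposition. The only point worth being careful about is bookkeeping with the index set $K+1:K+N,\,1:K$ so that the submatrix extracted from $\bthe$ is indeed the lower-left $N\times K$ block appearing in \eqref{construct:theta}; this is immediate from the block partition but should be stated explicitly to make the correspondence transparent.
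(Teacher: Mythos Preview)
Your proposal is correct and matches the paper's approach exactly: the paper does not write out a separate proof for Corollary~\ref{corollary:wmilac} but simply states that it follows from Proposition~\ref{pro1}, and your two-inclusion argument is precisely the unpacking of that claim. The only minor point is that the block dimensions quoted from \eqref{construct:theta} contain a typo in the paper ($\bthe_{11}$ should be $K\times K$ and $\bthe_{22}$ should be $N\times N$ to match $[\bthe]_{K+1:K+N,1:K}=\bF$), but this does not affect the validity of your argument.
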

Compared to \eqref{Wmilac0}, Corollary \ref{corollary:wmilac} gives a much clearer characterization of the beamforming matrices achieved by MiLAC-aided beamforming. 
 It also generalizes the result in \eqref{unitary}, which was established in \cite{MIMOcapacity}. In particular, for  any matrix $\bW=[\bw_1,\bw_2,\dots,\bw_K]$ satisfying $\bw_i^H\bw_j=0$ for all $~i\neq j$ and $\|\bW\|_F^2=P_T$,  let $$\mathbf{f}_{i}=\frac{\bw_i}{\|\bw_i\|_2}\text{  and  }~p_i=\|\bw_i\|_2^2,~i=1,2,\dots, K.$$ Then $\bW=\mathbf{F}\text{diag}(\mathbf{p})^{\frac{1}{2}}$, where $\bF=[\mathbf{f}_1,\mathbf{f}_2,\dots,\mathbf{f}_K]$ satisfies  $\|\bF\|_2=1$ and $\mathbf{p}$ satisfies  $\mathbf{1}^T\mathbf{p}=P_T$, and thus $\bW\in\mathcal{W}_{\text{MiLAC}}$ according to \eqref{MiLAC2}.

\subsection{Comparison with Digital Beamforming}\label{sec:3b}
With \eqref{MiLAC2}, we can now compare the beamforming matrix sets of MiLAC-aided and digital beamforming.

For traditional digital beamforming, the transmitter is equipped with $N$ RF chains, and the source signal at the RF chains is $\mathbf{c}=\mathbf{W}\mathbf{s}$, where $\mathbf{W}\in\C^{N\times K}$ is the digital beamforming matrix. Let $P_T$ be the maximum transmit power at the voltage generators, then $\bW$ must satisfy the power constraint $\|\bW\|_F^2\leq P_T$. As discussed in \cite{MIMOcapacity},  when the transmit antennas are perfectly matched to the source generators series impedance $Z_0$, the signal at the transmit antennas is $\x=\mathbf{c}/2$. Hence, the receive signal at user $k$ is\footnote{Compared to the conventional formulation, the receive signal here contains an additional factor of $1/2$. This arises because the maximum transmit power constraint is imposed on the signal at the voltage generators, $\mathbf{c}$, rather than on the signal at the transmit antennas, $\x$. } 
$$r_k=\frac{1}{2}\mathbf{h}_{k}^H\bW\mathbf{s}+n_k.$$
Analogous to the case of MiLAC-aided beamforming, the digital beamforming matrix set can be defined as follows
$$\mathcal{W}_{\text{digital}}=\{\bW\mid \|\bW\|_F^2\leq P_T\}.$$
The following proposition characterizes the relation between $\mathcal{W}_{\text{MiLAC}}$ and $\mathcal{W}_{\text{digital}}$.
\begin{proposition}\label{proposition2}
The following results hold for $\mathcal{W}_{\text{MiLAC}}$ given by \eqref{MiLAC2} and $\mathcal{W}_{\text{digital}}=\{\bW\mid \|\bW\|_F^2\leq P_T\}$.
\begin{itemize}
\item[(i)] $\mathcal{W}_{\text{MiLAC}}\subseteq\mathcal{W}_{\text{digital}}$;
\item[(ii)] $\mathcal{W}_{\text{MiLAC}}\subsetneqq\mathcal{W}_{\text{digital}}$. In particular, given $\|\bW\|_F^2=P_T$, $\bW\in\mathcal{W}_{\text{MiLAC}}$ only if $\bw_i^H\bw_j=0,~\forall~i\neq j$. 
\end{itemize}
\end{proposition}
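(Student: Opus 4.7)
Both parts rest on the clean characterization given by Corollary~\ref{corollary:wmilac}: any $\bW\in\mathcal{W}_{\text{MiLAC}}$ factors as $\bW=\bF\,\text{diag}(\mathbf{p})^{1/2}$ with $\|\bF\|_2\le 1$ and $\mathbf{1}^T\mathbf{p}\le P_T$. The plan is to reduce everything to a column-wise Frobenius/trace bound. For part (i), I would write $\bw_i=\sqrt{p_i}\,\mathbf{f}_i$, where $\mathbf{f}_i$ is the $i$-th column of $\bF$, and use $\|\mathbf{f}_i\|_2\le\|\bF\|_2\le 1$. This yields $\|\bw_i\|_2^2\le p_i$, and summing over $i$ gives $\|\bW\|_F^2\le\mathbf{1}^T\mathbf{p}\le P_T$, so $\bW\in\mathcal{W}_{\text{digital}}$.

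For part (ii), the strategy is to analyze when the chain of inequalities in (i) is tight. If $\|\bW\|_F^2=P_T$, equality in every step forces $\mathbf{1}^T\mathbf{p}=P_T$ and $\|\mathbf{f}_i\|_2=1$ whenever $p_i>0$. Combined with $\bF^H\bF\preceq\bI_K$, which is equivalent to $\|\bF\|_2\le 1$, the Gram matrix $\bF^H\bF$ is PSD, bounded above by the identity, and has unit diagonal on the support $\mathcal{S}=\{i:p_i>0\}$. I would then invoke the elementary fact that a PSD matrix with a zero on diagonal entry $i$ is identically zero on row and column $i$, since $|A_{ij}|\le\sqrt{A_{ii}A_{jj}}$ for any PSD $\bA$. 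Applying this to $\bI_K-\bF^H\bF\succeq \mathbf{0}$ forces $\mathbf{f}_i^H\mathbf{f}_j=\delta_{ij}$ whenever $i\in\mathcal{S}$. For $i\notin\mathcal{S}$ we have $\bw_i=\mathbf{0}$, so in all cases $\bw_i^H\bw_j=\sqrt{p_ip_j}\,\mathbf{f}_i^H\mathbf{f}_j=0$ for $i\neq j$. Strict inclusion $\mathcal{W}_{\text{MiLAC}}\subsetneqq\mathcal{W}_{\text{digital}}$ then follows by exhibiting any full-power digital matrix with two non-orthogonal columns, e.g., $\bW=\sqrt{P_T/K}\,[\u,\u,\ldots,\u]$ for any unit vector $\u$.

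The only delicate point is the equality analysis: one must notice that the two separate bounds $\mathbf{1}^T\mathbf{p}\le P_T$ and $\|\mathbf{f}_i\|_2\le 1$ must saturate simultaneously, and then apply the zero-diagonal PSD lemma only on the support of $\mathbf{p}$ so that the argument is not vacuous at indices with $p_i=0$. Everything else is routine linear algebra.
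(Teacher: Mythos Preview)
Your proof is correct and follows essentially the same approach as the paper: both rely on the PSD Cauchy--Schwarz fact $|A_{ij}|^2\le A_{ii}A_{jj}$ to force the cross terms to vanish once the diagonal inequalities saturate under $\|\bW\|_F^2=P_T$. The only cosmetic difference is that the paper applies this directly to $\text{diag}(\mathbf{p})-\bW^H\bW\succeq\mathbf{0}$, whereas you apply it to $\bI_K-\bF^H\bF\succeq\mathbf{0}$ and then track the support $\mathcal{S}=\{i:p_i>0\}$; working at the $\bW$ level as the paper does avoids your case split on $p_i=0$.
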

\begin{proof}
The proof for (i) is straightforward: 
\begin{equation}\label{Wmilac}
\|\bF\text{diag}(\mathbf{p})^{\frac{1}{2}}\|_F^2\leq \|\bF\|_2^2\|\text{diag}(\mathbf{p})^{\frac{1}{2}}\|_F^2\leq P_T.
\end{equation}
We next prove (ii). Given $\bW\in\mathcal{W}_{\text{MiLAC}}$ with $\|\bW\|_F^2=P_T$, we have 
\begin{equation}\label{proof:eq1}
\bW^H\bW=\text{diag}(\mathbf{p})^{\frac{1}{2}}\bF^H\bF\text{diag}(\mathbf{p})^{\frac{1}{2}}\preceq \text{diag}(\mathbf{p}),
\end{equation}
where the last inequality holds since  $\|\bF\|_2\leq 1$ is equivalent to $\bF^H\bF\preceq\mathbf{I}_K$. 
Using the property that any principal submatrix of a positive semidefinite matrix is also positive semidefinite, we have 
$$\left[\begin{matrix}p_i&0\\0&p_j\end{matrix}\right]-\left[\begin{matrix} \|\bw_i\|^2&\bw_i^H\bw_j\\\bw_j^H\bw_i&\|\bw_j\|^2\end{matrix}\right]\succeq \mathbf{0},~\forall~i<j,$$
where the matrix on the left-hand side is the $2\times 2$ principal submatrix of $\text{diag}(\mathbf{p})-\bW^H\bW$ formed by rows and columns $i$ and $j$.
 It follows that 
 \begin{subequations}
 \begin{align}
& \|\bw_i\|^2\leq p_i, ~\|\bw_j\|^2\leq p_j,\label{proof:eq2a}\\
 &|\bw_i^H\bw_j|^2\leq (p_i-\|\bw_i\|^2)(p_j-\|\bw_j\|^2),~\forall~i<j.\label{proof:eq2b}
 \end{align}
 \end{subequations}
We claim that \eqref{proof:eq2a} holds with equality, i.e., $\|\bw_i\|^2=p_i$ for all $i=1,2,\dots, K;$ otherwise $\|\bW\|_F^2<\mathbf{1}^T\mathbf{p}=P_T$, which  contradicts with the condition $\|\bW\|_F^2=P_T$. It then follows from \eqref{proof:eq2b} that 
  $\bw_i^H\bw_j=0$ for all $i\neq j$. 
\end{proof}
Proposition \ref{proposition2} shows that digital beamforming offers greater flexibility in beamforming matrix design than MiLAC-aided beamforming. Specifically, every beamforming matrix achieved by MiLAC-aided beamforming can also be achieved using digital beamforming (Proposition \ref{proposition2} (i)), whereas the converse does not hold: there exist beamforming matrices achieved with digital beamforming that MiLAC-aided beamforming cannot realize (Proposition \ref{proposition2} (ii)). A similar conclusion has also been established in \cite{fang2026}. However, \cite{fang2026} studies the set  $\mathcal{W}_{\text{MiLAC}}$ in its original form in \eqref{Wmilac0}, rather than expressing it in the more compact form given in \eqref{MiLAC2}. This compact formulation further enables systematic comparisons with other beamforming schemes and facilitates efficient algorithm design, which will be detailed in the following (sub)sections. 

\begin{remark}
The conclusion in Proposition \ref{proposition2} differs from that in \cite{part2}, where an unconstrained MiLAC was shown to achieve the same flexibility as digital beamforming. This arises because the lossless and reciprocity constraints restrict the flexibility of  \(\mathbf{F}\). Interestingly, from the proof of Proposition \ref{pro1}, it can be readily shown that relaxing the lossless and reciprocity constraints \(\boldsymbol{\Theta}^H \boldsymbol{\Theta} = \mathbf{I}_{N+K}\) and \(\boldsymbol{\Theta} = \boldsymbol{\Theta}^T\) to the passivity constraint \(\boldsymbol{\Theta}^H \boldsymbol{\Theta} \preceq \mathbf{I}_{N+K}\) leads to the same achievable beamforming matrix set.  
That is, the set \(\mathcal{W}_{\text{MiLAC}}\) given in Corollary \ref{corollary:wmilac} remains unchanged even when the MiLAC is subject only to a passivity constraint. This observation indicates that the reduced flexibility of MiLAC-aided beamforming is an inherent consequence of passive  microwave networks, whereas imposing lossless and reciprocity constraints does not further reduce the achievable flexibility. 

For clarity, we give a summary of physical constraints on MiLAC and the corresponding achievable beamforming matrices in Table \ref{summary}. \end{remark}

\begin{table*}
\fontsize{7}{7}\selectfont{
\caption{A summary of physical constraints on MiLAC and the achievable beamforming matrix.}
\begin{tabular}{c|c|c|c}
\hline
\multirow{2}{*}{Physical Constraints on MiLAC}&\multirow{2}{*}{Mathematical Representation}&\multirow{2}{*}{Achievable Beamforming Matrix}&\multirow{2}{*}{Examples}\\
&&&\\
\hline
\multirow{2}{*}{No physical constraints \cite{part2}}&\multirow{2}{*}{$\backslash$}&\multirow{2}{*}{$ \|\bW\|_F^2\leq P_T$}&can implement specific beamforming matrices, e.g.,\\
&&& zero-forcing, with a complexity of $\mathcal{O}(NK)$\\
\hline
\multirow{2}{*}{Passive}&\multirow{2}{*}{$\bthe^H\bthe\preceq \mathbf{I}_{N+K}$}&\multirow{2}{*}{$\bW=\mathbf{F}\text{diag}(\mathbf{p})^{{1}/{2}}, \|\bF\|_2\leq 1,~\mathbf{1}^T\mathbf{p}\leq P_T$} &column-orthogonal \\
&&&(semi-unitary) matrix\\
\hline
\multirow{2}{*}{Lossless and reciprocal} &\multirow{2}{*}{$\bthe^H\bthe=\mathbf{I}_{N+K},~\bthe=\bthe^T$}&\multirow{2}{*}{$\bW=\mathbf{F}\text{diag}(\mathbf{p})^{{1}/{2}}, \|\bF\|_2\leq 1,~\mathbf{1}^T\mathbf{p}\leq P_T$} &column-orthogonal \\
&&&(semi-unitary) matrix\\
\hline
\end{tabular}
\label{summary}}
\centering
\end{table*}

A natural question now arises: What kind of digital beamforming matrices cannot be achieved by MiLAC-aided beamforming? Proposition \ref{proposition2} (ii) provides one such example: MiLAC-aided beamforming cannot realize full-power beamforming matrices whose columns are not unitary. We offer further insights in the following remark. 
\begin{remark}\label{remark1}
To compare $\mathcal{W}_{\text{MiLAC}}$ with $\mathcal{W}_{\text{digital}}$, we first rewrite $\mathcal{W}_{\text{digital}}$ as 
\begin{equation}\label{digital:F}
\mathcal{W}_{\text{digital}}=\{\bW=\bF\text{\normalfont diag}(\mathbf{p})^{\frac{1}{2}}\mid \max_{i}\|\mathbf{f}_{i}\|_2\leq 1,  \mathbf{1}^T\mathbf{p}\leq P_T\}.
\end{equation}
The difference between $\mathcal{W}_{\text{MiLAC}}$ and $\mathcal{W}_{\text{digital}}$ lies in the constraint on $\bF$. In $\mathcal{W}_{\text{MiLAC}}$, the matrix $\bF$ is required to satisfy
\begin{equation}\label{milac:F}
\|\bF\|_2=\max_{\|\x\|_2=1}\left\|\sum_{i=1}^Kx_i\mathbf{f}_i\right\|\leq 1.
\end{equation}
Since $\|\bF\|_2\geq  \max_{i}\|\mathbf{f}_{i}\|_2$, the constraint in \eqref{milac:F} is more restrictive than that in \eqref{digital:F}. Moreover, this inequality is generally strict. For example, consider the matrix $\bF=[\mathbf{f},\mathbf{f},\dots,\mathbf{f}]$ with $\|\mathbf{f}\|_2=1$. It satisfies $\max_{i}\|\mathbf{f}_{i}\|_2= 1$, but $\|\bF\|_2=\sqrt{K}$ and thus  \eqref{milac:F} does not hold. 
This again shows that $\mathcal{W}_{\text{MiLAC}}\subsetneqq \mathcal{W}_{\text{digital}}$. 

Intuitively, the difference arises because the MiLAC constraint in \eqref{milac:F} accounts for the correlation among the columns of $\bF$, whereas \eqref{digital:F} only constrains the norm of each column individually.  Consequently, $\mathcal{W}_{\text{MiLAC}}$ excludes matrices in $\mathcal{W}_{\text{digital}}$ whose columns are highly correlated.
\end{remark}

Remark \ref{remark1} further implies that MiLAC-aided beamforming can achieve performance very close to digital beamforming when the users have nearly orthogonal channels, since in this case the optimal digital beamforming matrices tend to have nearly orthogonal columns and thus are well approximated within $\mathcal{W}_{\text{MiLAC}}$. This situation naturally arises in gigantic MIMO systems, where the large number of antennas typically makes user channels asymptotically orthogonal.

\subsection{Comparison with Phase-Shifter-Based Analog Beamforming}\label{sec:3c}
Traditional analog beamforming, including the analog component in hybrid beamforming architectures, is realized using phase shifters and can be modeled by a matrix $\bW_A\in\C^{N\times K}$ with a constant-modulus constraint, i.e., 
\begin{equation}\label{analog}
|[\bW_A]_{i,j}|=\frac{1}{\sqrt{NK}},~\forall~i,j.
\end{equation} Here, the normalization is introduced to guarantee that the phase-shifter network is power normalized and does not amplify the input signal. 

Alternatively, analog beamforming can be realized via MiLAC. From Proposition \ref{pro1} and noting that any matrix characterized by \eqref{analog} satisfies  $\|\bW_A\|_2\leq \|\bW_A\|_F=1$, we can conclude that the class of analog beamforming matrices achieved by phase shifters is strictly contained within the class achievable by MiLAC.  In other words, MiLAC offers greater flexibility for analog beamforming than conventional phase shifters. 

\subsection{A Digital-Achieving Hybrid Digital–MiLAC Architecture}\label{sec:3d}
The discussion in Section \ref{sec:3c} motivates a hybrid digital–MiLAC beamforming scheme, in which  the phase shifters used in conventional hybrid beamforming is replaced with MiLAC; see Fig. \ref{scheme2} (d). This scheme provides greater flexibility than traditional hybrid beamforming with the same number of RF chains. Moreover, we show in the following proposition that a hybrid digital-MiLAC beamforming scheme achieves the same flexibility as digital beamforming with $K$ RF chains.

\begin{proposition}
Any matrix $\bW\in\mathcal{W}_{\text{digital}}$ can be achieved by a hybrid digital-MiLAC architecture with $K$ RF chains. 
\end{proposition}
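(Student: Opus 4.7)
The plan is to give an explicit constructive decomposition. Given any $\bW\in\mathcal{W}_{\text{digital}}$, I want to produce an analog stage $\bF\in\C^{N\times K}$ realizable by a lossless and reciprocal MiLAC (i.e.\ $\|\bF\|_2\le 1$, by Corollary~\ref{corollary:wmilac}) together with a $K\times K$ digital precoder $\bW_D$ feeding the $K$ RF chains, such that $\bW=\bF\bW_D$ and the RF-side power budget $\|\bW_D\|_F^2\le P_T$ is respected.

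First, I would take a (zero-padded) compact SVD of the target matrix, writing $\bW=\bU_W\bD_W\bV_W^H$, where $\bU_W\in\C^{N\times K}$ has orthonormal columns, $\bD_W=\operatorname{diag}(\sigma_1,\dots,\sigma_K)\in\R^{K\times K}$ (with possibly some $\sigma_i=0$ if $\bW$ is rank-deficient), and $\bV_W\in\C^{K\times K}$ is unitary. Since $K\le N$ in the regime of interest, such a factorization always exists.

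Second, I would simply set $\bF:=\bU_W$ and $\bW_D:=\bD_W\bV_W^H$. Then $\bF\bW_D=\bW$ by construction. The analog stage is MiLAC-feasible because $\bU_W$ has orthonormal columns, so $\|\bF\|_2=1$, which places $\bF$ in the MiLAC-achievable set characterized by Proposition~\ref{pro1}/Corollary~\ref{corollary:wmilac} (with the associated power allocation of the MiLAC absorbed entirely into $\bW_D$). For the RF-chain power budget I would compute
\[
\|\bW_D\|_F^2=\|\bD_W\bV_W^H\|_F^2=\sum_{i=1}^K\sigma_i^2=\|\bW\|_F^2\le P_T,
\]
which gives the desired transmit-power feasibility of the digital stage.

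There is no real obstacle: once the compact characterization $\|\bF\|_2\le 1$ of MiLAC-achievable analog matrices is in hand, the proof is essentially a one-line SVD. The only mildly subtle points to flag are (i) clarifying that in the hybrid digital–MiLAC architecture the total-power constraint applies at the RF-chain side, so $\|\bW_D\|_F^2\le P_T$ is the correct budget, and (ii) noting that the MiLAC in this hybrid setting does not need its own power-allocation vector $\mathbf{p}$, since the $K\times K$ digital precoder $\bW_D$ already carries all per-stream scaling and mixing.
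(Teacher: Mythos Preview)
Your proposal is correct and takes essentially the same approach as the paper: both arguments take the SVD of $\bW$, set the MiLAC analog matrix $\bF$ to the first $K$ left singular vectors (so $\|\bF\|_2\le 1$), and set the $K\times K$ digital precoder to $\bD_W\bV_W^H$, whose Frobenius norm squared equals $\|\bW\|_F^2\le P_T$. The only cosmetic difference is that you invoke a compact SVD directly while the paper writes the full SVD and then restricts to the first $K$ columns.
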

\begin{proof}
The hybrid digital-MiLAC beamforming matrix can be written as 
$\bW=\mathbf{F}\mathbf{P}$, where $\bF\in\C^{N\times K}$ satisfies the MiLAC-induced constraint $\|\bF\|_2\leq 1$ and $\bP\in\C^{K\times K}$ satisfies the transmit power constraint $\|\bP\|_F^2\leq P_T$. Given any digital beamforming matrix $\bW$ with $\|\bW\|_F^2\leq P_T$,  let $\bW=\bU_W\bD_W\bV_W^H$ be its SVD. Then $\bW$ can be realized by the hybrid digital–MiLAC architecture by setting $\bF=[\bU_W]_{:,1:K}$ and $\bP=[\bD_{W}]_{1:K,:}\bV_{W}^H$.

\end{proof}
It is well known that conventional hybrid beamforming requires $2K$ RF chains to achieve fully digital beamforming \cite{hybrid2}. The proposed hybrid digital–MiLAC architecture reduces the required number of RF chains to only $K$, thanks to the greater flexibility provided by MiLAC compared with phase shifters.


\section{MiLAC-aided Beamforming Design}\label{sec:4}
In the previous section, we characterized the MiLAC-aided beamforming matrices and showed that MiLAC-aided beamforming cannot achieve the full flexibility of digital beamforming. This section aims to design effective and efficient algorithms for MiLAC-aided beamforming.


\subsection{A Convex Reformulation of $\mathcal{W}_{\text{MiLAC}}$ }
We begin by  transforming $\mathcal{W}_{\text{MiLAC}}$ into a convex set. 
\begin{lemma}\label{lemma1}
$\mathcal{W}_{\text{MiLAC}}$ can be equivalently expressed as the following convex set:
\begin{equation}\label{MiLAC3}
\mathcal{W}_{\text{MiLAC}}=\{\bW\mid \bW^H\bW\preceq\text{diag}(\mathbf{p}),~\mathbf{1}^T\mathbf{p}\leq P_T\}.
\end{equation}
\end{lemma}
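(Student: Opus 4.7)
The plan is to prove the identity by double inclusion and also justify the convexity claim.

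\textbf{Forward inclusion} ($\subseteq$). Take any $\bW \in \mathcal{W}_{\text{MiLAC}}$ as characterized in Corollary~\ref{corollary:wmilac}, so that $\bW = \bF\,\text{diag}(\mathbf{p})^{1/2}$ with $\|\bF\|_2 \le 1$ and $\mathbf{1}^T\mathbf{p} \le P_T$. The condition $\|\bF\|_2 \le 1$ is equivalent to $\bF^H\bF \preceq \mathbf{I}_K$, so
\begin{equation*}
\bW^H\bW = \text{diag}(\mathbf{p})^{1/2}\,\bF^H\bF\,\text{diag}(\mathbf{p})^{1/2} \preceq \text{diag}(\mathbf{p}).
\end{equation*}
This is the same manipulation already used in \eqref{proof:eq1}.

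\textbf{Reverse inclusion} ($\supseteq$). Take $(\bW,\mathbf{p})$ with $\bW^H\bW \preceq \text{diag}(\mathbf{p})$ and $\mathbf{1}^T\mathbf{p} \le P_T$. I need to exhibit $\bF$ with $\|\bF\|_2\le 1$ such that $\bW = \bF\,\text{diag}(\mathbf{p})^{1/2}$. If all $p_i>0$, simply take $\bF = \bW\,\text{diag}(\mathbf{p})^{-1/2}$; congruence by $\text{diag}(\mathbf{p})^{-1/2}$ of the LMI $\bW^H\bW \preceq \text{diag}(\mathbf{p})$ yields $\bF^H\bF \preceq \mathbf{I}_K$. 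The main obstacle is the degenerate case where some $p_i = 0$ and $\text{diag}(\mathbf{p})^{-1/2}$ is undefined. The key observation is that when $p_i = 0$, the $i$-th diagonal entry of $\text{diag}(\mathbf{p}) - \bW^H\bW$ is $-\|\bw_i\|_2^2$, and positive semidefiniteness forces this to be nonnegative, hence $\bw_i = \mathbf{0}$. I therefore set $\mathbf{f}_i = \mathbf{0}$ for $i \notin \mathcal{I}:=\{i : p_i>0\}$ and $\mathbf{f}_i = \bw_i/\sqrt{p_i}$ for $i\in\mathcal{I}$. Restricting the LMI to the principal submatrix indexed by $\mathcal{I}$ and applying the nonsingular congruence above to that block gives $[\bF]_{:,\mathcal{I}}^H[\bF]_{:,\mathcal{I}} \preceq \mathbf{I}_{|\mathcal{I}|}$, and since the remaining columns of $\bF$ are zero, $\|\bF\|_2\le 1$. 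By construction $\bW = \bF\,\text{diag}(\mathbf{p})^{1/2}$, which places $\bW$ in $\mathcal{W}_{\text{MiLAC}}$ via Corollary~\ref{corollary:wmilac}.

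\textbf{Convexity.} The set in \eqref{MiLAC3} is most naturally viewed in the joint variable $(\bW,\mathbf{p})$. The constraint $\text{diag}(\mathbf{p}) - \bW^H\bW \succeq \mathbf{0}$ is equivalent, by the Schur complement, to the linear matrix inequality
\begin{equation*}
\begin{bmatrix}\mathbf{I}_N & \bW\\ \bW^H & \text{diag}(\mathbf{p})\end{bmatrix}\succeq \mathbf{0},
\end{equation*}
which is jointly affine in $(\bW,\mathbf{p})$, together with the affine constraint $\mathbf{1}^T\mathbf{p}\le P_T$. The set in \eqref{MiLAC3} is then the projection of this convex set onto the $\bW$-coordinates (equivalently, $\bW \in \mathcal{W}_{\text{MiLAC}}$ iff some feasible $\mathbf{p}$ exists), which preserves convexity. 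This also matches the LMI-based formulation announced in the introduction.
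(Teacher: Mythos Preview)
Your proof is correct and follows essentially the same route as the paper: both directions use the same ideas (the forward inclusion via \eqref{proof:eq1}, and the reverse inclusion by inverting $\text{diag}(\mathbf{p})$ on the nondegenerate indices after observing that $p_i=0$ forces $\bw_i=\mathbf{0}$), with the only cosmetic difference being that the paper packages the degenerate case using a pseudo-inverse $\mathbf{p}^\dagger$ rather than your index set $\mathcal{I}$. Your explicit Schur-complement justification of convexity is a welcome addition; the paper asserts convexity in the lemma statement and only introduces the LMI form \eqref{LMI} in the surrounding discussion.
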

\begin{proof}
We use $\mathcal{W}_{\text{MiLAC}}^1$ and $\mathcal{W}_{\text{MiLAC}}^2$ to denote the set in \eqref{MiLAC2} and \eqref{MiLAC3}, respectively.  The goal is to show that $\mathcal{W}_{\text{MiLAC}}^1=\mathcal{W}_{\text{MiLAC}}^2$.

First, {$\mathcal{W}_{\text{MiLAC}}^1\subseteq\mathcal{W}_{\text{MiLAC}}^2$} follows from \eqref{proof:eq1}. We next prove that   \emph{$\mathcal{W}_{\text{MiLAC}}^2\subseteq\mathcal{W}_{\text{MiLAC}}^1$}. Given $\bW\in\mathcal{W}_{\text{MiLAC}}^2$, we define 
\begin{equation}\label{proof2:eq1}
\bF=\bW\text{diag}(\mathbf{p}^\dagger)^{\frac{1}{2}},
\end{equation}
where $p^\dagger_i=p_i^{-1}$ if $p_i\neq 0$, and $p^\dagger_i=0$ otherwise.  With \eqref{proof2:eq1}, we claim that 
$\bW=\bF\text{diag}(\mathbf{p})^{\frac{1}{2}}.$
Specifically, comparing each column of both sides of \eqref{proof2:eq1} and using the definition of $\mathbf{p}^{\dagger}$ yield
\begin{equation}\label{proof2:eq2}
\mathbf{f}_{i}=
\left\{ \begin{aligned}\frac{\bw_i}{\sqrt{p_i}},~&\text{if }p_i>0,~~\\
\mathbf{0},~~&\text{if }p_i=0.
\end{aligned}\right.
\end{equation}
 Hence, if $p_i>0$, $\bw_i=\mathbf{f}_i\sqrt{p_i}$. If $p_i=0$, it follows from $ \bW^H\bW\preceq\text{diag}(\mathbf{p})$ that $\bw_i=\mathbf{0}$; see \eqref{proof:eq2a}, and thus $\mathbf{w}_i=\mathbf{f}_i\sqrt{p_i}$, which proves the claim. 
 In addition, 
$$
\begin{aligned}
\bF^H\bF&=\text{diag}(\mathbf{p}^\dagger)^{1/2}\bW^H\bW\text{diag}(\mathbf{p}^\dagger)^{1/2}\\
&\preceq\text{diag}(\mathbf{p}^\dagger)^{1/2}\text{diag}(\mathbf{p})\text{diag}(\mathbf{p}^\dagger)^{1/2}\preceq\mathbf{I},
\end{aligned}$$
i.e., $\|\bF\|_2\leq 1$. Therefore, $\bW\in\mathcal{W}_{\text{MiLAC}}^1$.

\end{proof}
With Lemma \ref{lemma1}, the sum-rate maximization problem in \eqref{sumrate} can be equivalently expressed as 
\begin{subequations}\label{sumrate2}
\begin{align}
\max_{\bW,\mathbf{p}}~&\sum_{k=1}^K\log\left(1+\frac{|\h_k^H\mathbf{w}_{k}|^2}{\sum_{j\neq k}|\h_k^H\bw_{j}|^2+\sigma^2}\right)\\
\text{s.t. }~&\bW^H\bW\preceq\text{diag}(\mathbf{p}),\label{con:1}\\
&\mathbf{1}^T\mathbf{p}\leq P_T\label{con:2},
\end{align}
\end{subequations}
where $\sigma^2:=4\sigma_n^2$. This formulation resembles the classical sum-rate maximization problem with fully digital beamforming, except that the  power constraint $\|\bW\|_F^2\leq P_T$ is now replaced by \eqref{con:1} and \eqref{con:2}. 
Similar to the digital beamforming case, we can apply the weighted minimum mean square error (WMMSE) \cite{wmmse} technique to transform the objective function in to a block-convex form and then solve the problem using block coordinate descent (BCD) algorithm. The only difference lies in the $(\bW,\mathbf{p})$-subproblem, which now includes the new constraints in \eqref{con:1} and \eqref{con:2}. In particular, the constraint \eqref{con:1} can be transformed into a linear matrix inequality (LMI) as \cite{boyd2004convex}
 \begin{equation}\label{LMI}
\left[\begin{matrix} \mathbf{I}_{N}&\bW\\\bW^H&\text{diag}(\mathbf{p})\end{matrix}\right]\succeq \mathbf{0}.
\end{equation}
Hence, the resulting $(\bW,\mathbf{p})$-subproblem is a semidefinite program with a quadratic objective function and can be solved to global optimality by CVX. 

 However, solving a semidefinite program with dimension $(N+K)$ at each iteration is computationally expensive, particularly in the context of gigantic MIMO. 
 In the following subsection, we exploit a low-dimensional subspace property of problem \eqref{sumrate2} to reduce its dimensionality, thereby enabling more efficient optimization algorithms.
 
   \subsection{A Low-Dimensional Reformulation of \eqref{sumrate2}}\label{sec:4b}
  In this section, we exploit a nice low-dimensional subspace property of problem \eqref{sumrate2}.  The idea is motivated by a similar property established for digital beamforming in \cite{rethinkingwmmse,rethinkingwmmse2}. However, due to the different constraints introduced by MiLAC, our problem requires dedicated analysis.
  
   In the following lemma, we show that it suffices to search for the  solutions of problem \eqref{sumrate2} in the low-dimensional subspace spanned by $\text{Ran}(\bH^H)$, where $\bH^H=[\h_1,\h_2,\dots,\h_K]$. 
     
\begin{lemma}\label{pro4}
For any feasible point $(\bW,\mathbf{p})$ of \eqref{sumrate2}, 
$(\Pi_{\mathbf{H}^H}\mathbf{W},\mathbf{p})$ is also feasible for \eqref{sumrate2} and achieves the same objective value as $(\bW,\mathbf{p})$, where $\Pi_{\mathbf{H}^H}\bW=\bH^H(\bH\bH^H)^{-1}\bH\bW$ is the orthogonal projection of $\bW$ onto the range space of $\bH^H$.
\end{lemma}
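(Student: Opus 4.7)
The plan is to verify two properties of $(\Pi_{\bH^H}\bW,\mathbf{p})$: it attains the same objective value as $(\bW,\mathbf{p})$, and it satisfies both constraints \eqref{con:1}--\eqref{con:2}. The proof will exploit two standard properties of the orthogonal projector $\Pi:=\Pi_{\bH^H}=\bH^H(\bH\bH^H)^{-1}\bH$: (a) it is Hermitian and idempotent, so $\Pi^H\Pi=\Pi$; (b) it satisfies $\mathbf{0}\preceq\Pi\preceq\bI_N$, since its eigenvalues are $0$ or $1$.

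First I would argue that the objective is invariant. The objective depends on $\bW$ only through the inner products $\h_k^H\bw_j$ for $j,k\in\{1,\dots,K\}$. A direct computation gives $\bH\Pi=\bH\cdot\bH^H(\bH\bH^H)^{-1}\bH=\bH$, so stacking rows yields $\h_k^H(\Pi\bW)=\h_k^H\bW$ for every $k$. Hence the signal-to-interference-plus-noise ratios for all users, and therefore the sum rate, are unchanged.

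Next I would check feasibility. The constraint $\mathbf{1}^T\mathbf{p}\leq P_T$ holds trivially since $\mathbf{p}$ is not modified. For the LMI constraint \eqref{con:1}, using $\Pi^H\Pi=\Pi$,
\begin{equation*}
(\Pi\bW)^H(\Pi\bW)=\bW^H\Pi\bW.
\end{equation*}
Since $\Pi\preceq\bI_N$, conjugating by $\bW$ preserves the ordering, giving
\begin{equation*}
\bW^H\Pi\bW\preceq \bW^H\bW\preceq\text{diag}(\mathbf{p}),
\end{equation*}
where the last inequality is just \eqref{con:1} for the original $\bW$. Hence $(\Pi\bW,\mathbf{p})\in\mathcal{W}_{\text{MiLAC}}$, completing the proof.

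There is no real obstacle here; the whole argument is a two-line PSD manipulation once one recognizes that $\Pi$ is an orthogonal projection onto $\text{Ran}(\bH^H)$ and therefore leaves the $\bH\bW$ product invariant while being contractive in the Loewner order. The only point worth emphasizing when writing it up is that the equivalence between \eqref{sumrate} and \eqref{sumrate2} relied on rewriting $\mathcal{W}_{\text{MiLAC}}$ through Lemma~\ref{lemma1}, so that the constraint set takes the PSD form \eqref{con:1} which is monotone under the substitution $\bW\mapsto\Pi\bW$; had we used the $\|\bF\|_2\leq 1$ form of \eqref{MiLAC2} the same fact would be less transparent.
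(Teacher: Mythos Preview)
Your proof is correct and follows essentially the same approach as the paper: the paper writes $\bW=\widehat{\bW}+\widetilde{\bW}$ with $\widehat{\bW}=\Pi_{\bH^H}\bW$ and uses $\bH\widetilde{\bW}=\mathbf{0}$ for the objective together with $\widehat{\bW}^H\widehat{\bW}\preceq\text{diag}(\mathbf{p})-\widetilde{\bW}^H\widetilde{\bW}\preceq\text{diag}(\mathbf{p})$ for feasibility, which is exactly your projector argument $\bW^H\Pi\bW\preceq\bW^H\bW$ written in decomposed form.
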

\begin{proof}
Express $\bW$ as 
$
\mathbf{W}= \widehat{\mathbf{W}} + \widetilde{\mathbf{W}}
$
where 
$\widehat{\mathbf{W}}= \Pi_{\mathbf{H}^H}\mathbf{W}$  and $\widetilde{\mathbf{W}}= \bigl(\mathbf{I} - \Pi_{\mathbf{H}^H}\bigr)\mathbf{W}$ are the orthogonal projection of $\bW$ onto the range space of $\bH^H$  and null space of $\bH$, respectively. With this representation, the objective function achieved by $\bW$ is 
\begin{equation}\label{obj}
\sum_{k=1}^K\log\left(1+\frac{|\h_k^H\hat{\mathbf{w}}_{k}|^2}{\sum_{j\neq k}|\h_k^H\hat{\bw}_{j}|^2+\sigma^2}\right),
\end{equation} where we have used the fact that $\bH\widetilde{\bW}=\mathbf{0}$. %
Since $(\bW,\mathbf{p})$ is feasible, we obtain
\begin{equation}\label{constraint}
\widehat{\mathbf{W}}^H\widehat{\mathbf{W}} \preceq \text{diag}(\mathbf{p})-\widetilde{\mathbf{W}}^H\widetilde{\mathbf{W}}\preceq\text{diag}(\mathbf{p}).
\end{equation}
Therefore, $(\widehat{\mathbf{W}},\mathbf{p})$ achieves the same objective value as $(\bW,\mathbf{p})$, given by \eqref{obj}, and  satisfies the constraints of  problem \eqref{sumrate2} due to \eqref{constraint}, which proves the lemma.
\end{proof}
As implied by Lemma \ref{pro4}, there exists an optimal solution $(\bW^*,\mathbf{p}^*)$ to \eqref{sumrate2} with $\bW^*\in\text{Ran}(\bH^H)$. Hence, the optimization problem can be solved without loss of optimality by restricting $\bW$ to $\text{Ran}(\bH^H)$. Note that any $\bW\in\text{Ran} (\bH^H)$ can be expressed as $\bW=\bH^H\bX$, where $\bX\in\C^{K\times K}$. Let $\overline{\bH}=\bH\bH^H$, we formulate the following low-dimensional problem:
\begin{subequations}\label{sumrate3}
\begin{align}
\max_{\bX,\mathbf{p}}~&\sum_{k=1}^K\log\left(1+\frac{|\bar{\h}_k^H\mathbf{x}_{k}|^2}{\sum_{j\neq k}|\bar{\bh}_k^H\x_{j}|^2+\sigma^2}\right)\\
\text{s.t. }~&\bX^H\overline{\bH}\bX\preceq\text{diag}(\mathbf{p}),~\label{con:X}\\
&\mathbf{1}^T\mathbf{p}\leq P_T,\label{con:p}
\end{align}
\end{subequations}
where $\bar{\bh}_k^H$ and $\mathbf{x}_k$ are the $k$-th row and $k$-th column of $\bar{\bH}$ and $\bX$, respectively. 
Compared to the original formulation in \eqref{sumrate2}, the variable $\bW\in\C^{N\times K}$ is replaced by the new variable $\bX\in\C^{K\times K}$ in \eqref{sumrate3}. This substantially reduces the dimension of the optimization variables,  especially in gigantic MIMO systems where $N\gg K$.

We remark that if $\bH$ is rank-deficient, the problem dimension can be further reduced by expressing  $\bW=\bA\bX$, where $\bA\in\C^{N\times \text{rank}(\bH)}$ spans $\text{Ran}(\bH^H)$ and $\bX\in\C^{\text{rank}(\bH)\times K}$. Throughout the remainder of the paper, we assume without loss of generality that $\bH$ has full row rank.

Since the objective functions of \eqref{sumrate2} and \eqref{sumrate3} are non-convex, optimization algorithms can typically guarantee convergence only to a stationary point. The following proposition further establishes the relationship between the stationary points of the two problems.\begin{proposition}\label{sp_eq}
Let $(\bW,\mathbf{p})$ be a stationary point of problem \eqref{sumrate2}.  Then $(\bar{\bH}^{-1}\bH\bW,\mathbf{p})$ is a stationary point of problem \eqref{sumrate3}.   On the other hand,  let $(\bX,\mathbf{p})$ be a stationary point of \eqref{sumrate3}, then $(\bH^H\bX,\mathbf{p})$ is a stationary point of \eqref{sumrate2}.
\end{proposition}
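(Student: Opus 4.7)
The plan is to carry out the proof through the first-order KKT conditions of the two problems, exploiting the structural observation that the sum-rate objective depends on $\bW$ only through $\bH\bW$. Specifically, its Wirtinger gradient factors as $\nabla_{\bW^*}f(\bW)=\bH^H\bM(\bH\bW)$ for some matrix-valued map $\bM(\cdot)\in\C^{K\times K}$, and by the chain rule the gradient of the reduced objective in \eqref{sumrate3} is $\nabla_{\bX^*}f_X(\bX)=\overline{\bH}\bM(\overline{\bH}\bX)$. Introducing a PSD multiplier $\bLam$ for the matrix constraint and a nonnegative scalar $\mu$ for the power constraint, and using the standing assumption that $\bH$ has full row rank (so $\overline{\bH}=\bH\bH^H$ is invertible), the stationarity conditions for \eqref{sumrate2} and \eqref{sumrate3} take, respectively, the forms
\begin{equation*}
\bH^H\bM(\bH\bW)=\bW\bLam,\qquad \overline{\bH}\bM(\overline{\bH}\bX)=\overline{\bH}\bX\bLam,
\end{equation*}
together with $\text{diag}(\bLam)=\mu\mathbf{1}$, complementary slackness on the matrix inequality and on $\mathbf{1}^T\mathbf{p}\leq P_T$, and primal feasibility.

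For the forward direction, given a stationary $(\bW,\mathbf{p})$ of \eqref{sumrate2} with multipliers $(\bLam,\mu)$, I would set $\bX=\overline{\bH}^{-1}\bH\bW$, which satisfies $\bH^H\bX=\Pi_{\bH^H}\bW=:\widehat{\bW}$ and $\bH\bW=\overline{\bH}\bX$. Left-multiplying the first stationarity relation by $\bH$ and invoking invertibility of $\overline{\bH}$ yields the stationarity of \eqref{sumrate3} with the same $(\bLam,\mu)$. Feasibility follows from $\bX^H\overline{\bH}\bX=\widehat{\bW}^H\widehat{\bW}\preceq\bW^H\bW\preceq\text{diag}(\mathbf{p})$, and the power-constraint slackness, the diagonal condition on $\bLam$, and the PSD requirement carry over unchanged. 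The backward direction is cleaner: setting $\bW=\bH^H\bX$ makes the null-space component $(\bI-\Pi_{\bH^H})\bW$ vanish and gives $\bW^H\bW=\bX^H\overline{\bH}\bX$, after which all KKT conditions of \eqref{sumrate2} follow from those of \eqref{sumrate3} by left-multiplying the stationarity condition by $\bH^H$.

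The main obstacle is the complementary slackness in the forward direction, since $\bW^H\bW=\widehat{\bW}^H\widehat{\bW}+\widetilde{\bW}^H\widetilde{\bW}$ with $\widetilde{\bW}:=(\bI-\Pi_{\bH^H})\bW$ generally carries a nonzero null-space contribution. My plan is to show $\widetilde{\bW}\bLam=\mathbf{0}$: writing $\bH^H\bM(\bH\bW)=(\widehat{\bW}+\widetilde{\bW})\bLam$ and noting that both $\bH^H\bM$ and $\widehat{\bW}\bLam$ are column-wise in $\text{Ran}(\bH^H)$ while each column of $\widetilde{\bW}\bLam$ lies in $\text{Null}(\bH)$, the intersection $\text{Ran}(\bH^H)\cap\text{Null}(\bH)=\{\mathbf{0}\}$ forces $\widetilde{\bW}\bLam=\mathbf{0}$; Hermiticity of $\bLam$ then gives $\bLam\widetilde{\bW}^H\widetilde{\bW}=\mathbf{0}$. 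This identity converts $\bLam(\bW^H\bW-\text{diag}(\mathbf{p}))=\mathbf{0}$ into $\bLam(\bX^H\overline{\bH}\bX-\text{diag}(\mathbf{p}))=\mathbf{0}$, which completes the translation of all KKT conditions and thus the proposition.
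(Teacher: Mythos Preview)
Your proposal is correct and essentially identical to the paper's proof: both work through the KKT systems of the two problems, exploit the factorization $\nabla_{\bW}R(\bW)=\bH^H\bG(\bH\bW)$, and hinge on the key observation $\widetilde{\bW}\bLam\in\text{Ran}(\bH^H)\cap\text{Null}(\bH)=\{\mathbf{0}\}$ to transfer complementary slackness in the forward direction. The only minor slip is in the backward direction, where you should left-multiply the stationarity equation of \eqref{sumrate3} by $\bH^H\overline{\bH}^{-1}$ rather than just $\bH^H$ to recover \eqref{sumrate2}'s stationarity condition, which is exactly what the paper does.
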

\begin{proof}
See Appendix \ref{app:proof:sp_eq}.
\end{proof}
Building on Proposition \ref{sp_eq}, we now turn to solving problem \eqref{sumrate3} in the following subsections. 
\subsection{WMMSE Approach for Solving \eqref{sumrate3}}
In this subsection, we present the classical WMMSE approach for solving \eqref{sumrate3}.  
By applying the same transformation  as in  \cite{wmmse},  problem \eqref{sumrate3} can be equivalently reformulated in the following weighted mean square error (MSE)  minimization form:
\begin{equation}\label{problem:wmmse}
\begin{aligned}
\min_{\boldsymbol{\omega},\boldsymbol{u},\bX,\mathbf{p}}~&\sum_{k=1}^K\left(\omega_kE_k(u_k,\bX)-\log(\omega_k)\right)\\
\text{s.t.}~~~~&\eqref{con:X}\text{ and }\eqref{con:p},
\end{aligned}
\end{equation}
where 
$$E_k(u_k,\bX):=|1-u_k^*\bar{\h}_k^H\x_k|^2+|u_k|^2\left(\sum_{j\neq k}|\bar{\h}_k^H\x_j|^2+\sigma^2\right)$$
is the MSE between symbol $s_k$ and the estimated symbol $\hat{s}_k:=u_k^*y_k$. We can now apply the BCD algorithm to iteratively minimize the objective function of \eqref{problem:wmmse}. The updates of $\boldsymbol{\omega}$ and $\u$ are standard and are exactly the same as in the digital beamforming case: 
\begin{equation}\label{uk}
u_k=\frac{\bar{\h}_k^H\x_k}{\sum_{j=1}^K|\bar{\h}_k^H\x_j|^2+\sigma^2},~k=1,2,\dots, K,
\end{equation}
and 
\begin{equation}\label{omegak}
\omega_k=(1-u_k^{*}\bar{\h}_k^H\x_k)^{-1},~k=1,2,\dots, K,
\end{equation}
which are obtained by minimizing the objective function in \eqref{problem:wmmse} with respective to $\boldsymbol{\omega}$ and ${\u}$, respectively, with the other variable blocks fixed. While fixing $\boldsymbol{\omega}$ and $\mathbf{u}$, $(\bX,\mathbf{p})$ are updated by solving the following problem: 
\begin{equation}\label{X-subproblem}
\begin{aligned}
\min_{\bX,\mathbf{p}}~& \text{tr}(\bX^H\bQ\bX)-2\RR\left(\text{tr}(\mathbf{L}\bX)\right)\\
\text{s.t.}~~&\eqref{con:X}\text{ and }\eqref{con:p},
\end{aligned}
\end{equation}
where $\bQ=\text{diag}({\boldsymbol{\omega}}\circ\u\circ\u^*)\bar{\bH}\bar{\bH}^H$ and $\mathbf{L}=\text{diag}(\boldsymbol{\omega}\circ\u^*)\bar{\bH}$. Similar to \eqref{LMI}, constraint \eqref{con:X} can be rewritten as a LMI: 
 \begin{equation}\label{LMI2}
\left[\begin{matrix} \mathbf{I}_{K}&\bar{\bH}^{\frac{1}{2}}\bX\\\bX^H\bar{\bH}^{\frac{1}{2}}&\text{diag}(\mathbf{p})\end{matrix}\right]\succeq \mathbf{0},
\end{equation}
where $\bar{\bH}^{\frac{1}{2}}$ is the square root of $\bar{\bH}$.
With \eqref{LMI2}, problem \eqref{X-subproblem} transforms to a SDP and can be solved  by CVX. We summarize the overall algorithm in Algorithm \ref{WMMSE}. The LMI in \eqref{LMI2} has a dimension of $2K$, which is much lower than the $(N+K)$-dimensional LMI in \eqref{LMI} that appears in the original formulation involving the variable $\bW$. This significantly improves the efficiency of the WMMSE algorithm.  In the following, we discuss the complexity and convergence of Algorithm \ref{WMMSE}.

\emph{Complexity:} First, constructing $\bar{\bH}$ has a complexity of $\mathcal{O}(NK^2)$. In Algorithm \ref{WMMSE}, the complexity of updating $\mathbf{u}$ and $\boldsymbol{\omega}$ is dominated by calculating $\bar{\bH}^H\bX$, which has a complexity of   $\mathcal{O}(K^3)$. For the $\bX$ subproblem, solving the SDP incurs a per-iteration complexity of $\mathcal{O}(K^6)$.  Let $I_{\text{out}}$ be the total number of iterations for Algorithm \ref{WMMSE}, and let $I_{\text{inner}}^{(i)}$ be the number of iterations for solving the  $\bX$-subproblem at the $i$-th outer iteration. The total complexity of Algorithm \ref{WMMSE} is therefore $\mathcal{O}(NK^2+K^6\sum_{i=1}^{I_{\text{out}}}I_{\text{inner}}^{(i)}).$

In contrast, the WMMSE algorithm applied directly to the original formulation \eqref{sumrate2} requires solving an SDP of dimension $N+K$ at each iteration, resulting in a total complexity of $\mathcal{O}((N+K)^6\sum_{i=1}^{I_{\text{out}}}I_{\text{inner}}^{(i)}),$ which is significantly higher.

\emph{Convergence:} We establish the convergence of Algorithm \ref{WMMSE} in the following theorem.  
\begin{theorem}\label{converge1}
Any limit point $(\mathbf{u}^*, \boldsymbol{\omega}^*,\bX^*,\mathbf{p}^*)$ of the sequence generated by Algorithm \ref{WMMSE} is a stationary point of problem \eqref{problem:wmmse}. Further, $(\bX^*,\mathbf{p}^*)$ is a stationary point of problem \eqref{sumrate3}.  
\end{theorem}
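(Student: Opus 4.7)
The plan is to prove the theorem in two stages. First, I will show that any limit point of the sequence generated by Algorithm \ref{WMMSE} is a stationary point of the WMMSE problem \eqref{problem:wmmse}. Second, I will transfer this stationarity back to problem \eqref{sumrate3} by exploiting the classical WMMSE--sum-rate equivalence that underlies the reformulation.

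For the first stage, I would invoke the standard convergence theory of block coordinate descent (BCD) with three blocks $\u$, $\boldsymbol{\omega}$, and $(\bX,\mathbf{p})$. Three facts make the classical argument apply: (a) the feasible set is a Cartesian product of closed convex sets---$\u\in\C^{K}$ and $\boldsymbol{\omega}\in\R_{++}^{K}$ are essentially unconstrained, while $(\bX,\mathbf{p})$ lie in the convex set defined by \eqref{con:X}--\eqref{con:p}; (b) each block subproblem admits a \emph{unique} minimizer, namely the closed-form updates \eqref{uk} and \eqref{omegak} for $\u$ and $\boldsymbol{\omega}$, and the strictly convex SDP \eqref{X-subproblem} for $(\bX,\mathbf{p})$; (c) the objective is continuously differentiable on the relevant domain and is monotonically non-increased by each block update, so the sequence of objective values converges. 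Together with the boundedness of the iterates (a consequence of the power constraint \eqref{con:p} combined with the explicit formulas for $\u$ and $\boldsymbol{\omega}$), these facts allow me to invoke the classical BCD convergence result (e.g., Bertsekas, Proposition 2.7.1), yielding that every limit point is a coordinate-wise minimum, which in turn coincides with a stationary point of \eqref{problem:wmmse}.

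For the second stage, I would use the equivalence underpinning the WMMSE reformulation. At any limit point $(\u^{*},\boldsymbol{\omega}^{*},\bX^{*},\mathbf{p}^{*})$, the updates \eqref{uk}--\eqref{omegak} hold, so $\u^{*}$ and $\boldsymbol{\omega}^{*}$ are the unconstrained minimizers of the WMMSE objective given $(\bX^{*},\mathbf{p}^{*})$. A direct calculation then shows that substituting \eqref{uk} and \eqref{omegak} back collapses the $k$-th summand of the WMMSE objective to $-\log(1+\mathrm{SINR}_k)$, recovering (the negative of) the sum-rate objective in \eqref{sumrate3}. By the envelope theorem, the gradient of this reduced sum-rate objective with respect to $(\bX,\mathbf{p})$ coincides with the partial gradient of the WMMSE objective evaluated at $(\u^{*},\boldsymbol{\omega}^{*})$. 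Since the feasible set in $(\bX,\mathbf{p})$ is identical in \eqref{problem:wmmse} and \eqref{sumrate3}, the KKT system transfers verbatim, which shows that $(\bX^{*},\mathbf{p}^{*})$ is a stationary point of \eqref{sumrate3}.

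The main obstacle I anticipate is the technical verification that the iterates remain in a regime where the above BCD machinery genuinely applies. Concretely, I must ensure that $\omega_k^{(i)}$ stays uniformly bounded away from both $0$ and $+\infty$ (so that $-\log\omega_k$ is well-defined and smooth along the sequence), and that the matrix $\bQ$ in \eqref{X-subproblem} is positive definite at every iteration so that the $(\bX,\mathbf{p})$-subproblem indeed has a \emph{unique} minimizer---this uniqueness being the prerequisite for the BCD limit-point theorem. The first property follows from the explicit form \eqref{omegak} combined with boundedness of $\bar{\bH}^{H}\x_k$ (itself a consequence of \eqref{con:X}--\eqref{con:p}), and the second is guaranteed by the full-row-rank assumption on $\bH$ together with $\u^{(i)}\neq\mathbf{0}$; nevertheless, these regularity checks will constitute the most delicate part of the proof.
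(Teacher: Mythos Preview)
Your overall strategy matches the paper's: verify the hypotheses of a block--coordinate--descent convergence theorem for the first claim, then transfer stationarity via the WMMSE--sum-rate equivalence for the second. However, there is a concrete error in your uniqueness argument for the $(\bX,\mathbf{p})$ block. The objective of subproblem \eqref{X-subproblem} is $\text{tr}(\bX^H\bQ\bX)-2\RR(\text{tr}(\mathbf{L}\bX))$, which does not involve $\mathbf{p}$ at all; consequently, even when $\bQ\succ\mathbf{0}$ the subproblem is strictly convex only in $\bX$, and the minimizer in $(\bX,\mathbf{p})$ is \emph{never} unique: for the optimal $\bX^\star$, any $\mathbf{p}$ satisfying $\text{diag}(\mathbf{p})\succeq(\bX^\star)^H\bar{\bH}\bX^\star$ and $\mathbf{1}^T\mathbf{p}\le P_T$ is equally optimal. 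Hence Bertsekas' Proposition~2.7.1, which requires a unique minimizer in \emph{every} block, does not apply as stated, and your identified ``most delicate part'' (positive definiteness of $\bQ$) would not close the gap even if it held.

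The paper avoids this issue by invoking a more permissive BCD result (of the Grippo--Sciandrone type) that guarantees stationarity of limit points whenever at least $n-1$ of the $n$ block subproblems admit unique minimizers. Here the $\u$- and $\boldsymbol{\omega}$-blocks are uniquely solved by \eqref{uk}--\eqref{omegak}, and no uniqueness is claimed for the $(\bX,\mathbf{p})$ block. Your boundedness verification and your second-stage argument are otherwise in line with the paper's treatment.
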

\begin{proof}
See Appendix \ref{app:convergence}.
\end{proof}
Although the computational complexity of WMMSE is greatly reduced by exploiting the low-dimensional structure, it remains computationally expensive in practice due to the requirement to solve an SDP at each iteration, even for small values of $K$. In the next subsection, we propose a low-complexity algorithm for solving \eqref{sumrate3} that  avoids solving any SDP.
\begin{algorithm}[t]
\caption{The WMMSE Algorithm for Solving \eqref{sumrate3}}
\label{alg:wmmse}
\begin{algorithmic}[1]
\STATE \textbf{Initialization:} Initialize $\bX=\sqrt{\frac{P_T}{K\|\bH\|^2_2}}\mathbf{I}_K$.
\STATE Set the tolerance of accuracy $\epsilon > 0$.
\REPEAT
\STATE $\omega_k'=\omega_k$,~$k=1,2,\dots, K$;
    \STATE $u_k=\frac{\bar{\h}_k^H\x_j}{\sum_{j=1}^K|\bar{\h}_k^H\x_k|^2+\sigma^2},~k=1,2,\dots, K$;
    \STATE $\omega_k=(1-u_k^{*}\bar{\h}_k^H\x_k)^{-1},~k=1,2,\dots, K$;

    \STATE Update $\bX$ and $\mathbf{p}$ by solving problem \eqref{X-subproblem};
\UNTIL{$\frac{\left|\sum_{k=1}^K\log(\omega_k)
      - \sum_{k=1}^K  \log(\omega'_k)\right|}{\sum_{k=1}^K\log(\omega'_k)} \le \epsilon$}
\STATE \textbf{Output:} $\bX$ and $\mathbf{p}$.
\end{algorithmic}
\label{WMMSE}
\end{algorithm}
\subsection{A Low-Complexity Algorithm for Solving \eqref{sumrate3}}\label{sec:4c}
The idea is to reformulate the complex LMI constraint in \eqref{con:X} into separate simpler constraints on $\bX$ and $\mathbf{p}$, and then update $\bX$ and $\mathbf{p}$ independently within the BCD framework.   Similar  to the proof of Lemma \ref{lemma1}, we can show that 
$$
\begin{aligned}
&\{\bX\mid \bX^H\bar{\bH}\bX\preceq\text{diag}(\mathbf{p}),~\mathbf{1}^T\mathbf{p}\leq P_T\}\\
&=\{\bX=\bar{\bH}^{-\frac{1}{2}}\mathbf{Y}\text{diag}(\mathbf{p})^{\frac{1}{2}}\mid \|\bY\|_2\leq 1, ~\mathbf{1}^T\mathbf{p}\leq P_T\}. 
\end{aligned}
$$ 
Using this reformulation and  denoting $\widehat{\bH}=\bar{\bH}^{\frac{1}{2}}$,  problem \eqref{sumrate3} transforms to
\begin{subequations}\label{sumrate4}
\begin{align}
\max_{\mathbf{F},\mathbf{p}}~&\sum_{k=1}^K\log_2\left(1+\frac{|\hat{\h}_k^H{\mathbf{y}}_{k}|^2p_k}{\sum_{j\neq k}|\hat{\bh}_k^H\mathbf{y}_{j}|^2p_j+\sigma^2}\right)\\
\text{s.t. }~&\|\mathbf{Y}\|_2\leq 1,~\label{con:Y}\\
&\mathbf{1}^T\mathbf{p}\leq P_T,\label{con:p2}
\end{align}
\end{subequations}
where $\hat{\h}_k^H$ and $\mathbf{y}_k$ are the $k$-th row and $k$-th column of $\widehat{\bH}$ and $\bY$, respectively.  
Applying the transformation in \eqref{problem:wmmse} gives
\begin{equation}\label{problem:wmmse2}
\begin{aligned}
\min_{\boldsymbol{\omega},\boldsymbol{u},\bY,\mathbf{p}}~&\sum_{k=1}^K\left(\omega_kE_k(u_k,\mathbf{p},\bY)-\log(\omega_k)\right)\\
\text{s.t.}~~~~&\text{\eqref{con:Y} and \eqref{con:p2}}
\end{aligned}
\end{equation}
where 
$$
\begin{aligned}
E_k(u_k,\mathbf{p},\bY)=&|1-\sqrt{p_k}u_k^*\hat{\h}_k^H\y_k|^2\\
&+|u_k|^2\left(\sum_{j\neq k}|\hat{\h}_k^H\y_j|^2p_j+\sigma^2\right)
\end{aligned}$$
We still use BCD algorithm to solve \eqref{problem:wmmse2}, but update $\bY$ and $\mathbf{p}$ separately. 
Similar to   \eqref{uk} and \eqref{omegak}, variables $\u$ and $\boldsymbol{\omega}$  are updated as 
$u_k=\frac{\hat{\h}_k^H\y_k\sqrt{p_k}}{\sum_{j=1}^K|\hat{\h}_k^H\x_j|^2p_j+\sigma^2},~k=1,2,\dots, K,
$and 
\begin{equation*}
\omega_k=(1-u_k^{*}\hat{\h}_k^H\y_k\sqrt{p_k})^{-1},~k=1,2,\dots, K,
\end{equation*}
 respectively. When $\u$, $\boldsymbol{\omega}$, and $\mathbf{Y}$ are fixed, the $\mathbf{p}$-subproblem is 
\begin{equation}\label{p-subproblem}
\begin{aligned}
\max_{\mathbf{p}}~&\sum_{k=1}^K(2\alpha_k\sqrt{p_k}-\beta_kp_k),~\\
\text{s.t. }~&\mathbf{1}^T\mathbf{p}\leq P_T.\end{aligned}
\end{equation}
where $\alpha_k=\omega_k\RR(u_k^*\hat{\h}_k^H\y_k)$ and $\beta_k=\sum_{j=1}^K\omega_j|u_j|^2|\hat{\h}_j^H\mathbf{y}_k|^2$.
The solution to \eqref{p-subproblem} can be derived from its Karush–Kuhn–Tucker (KKT) conditions. Let $\lambda$ be the Lagrange multiplier associated with $\mathbf{1}^T\mathbf{p}\leq P_T$. The KKT condition gives  
\begin{equation}\label{update:pk}
\begin{aligned}
p_k=\frac{\alpha_k^2}{(\beta_k+\lambda)^2},~~k=1,2,\dots,K,
\end{aligned}
\end{equation}
where $\lambda$ and $\mathbf{p}$ satisfy 
\begin{subequations}\label{kkt:pk}
\begin{align}
&\lambda(\mathbf{1}^T\mathbf{p}-P_T)=0, \label{complementary}\\
&\mathbf{1}^T\mathbf{p}\leq P_T,~{\lambda}\geq {0},\label{feasibility}
\end{align}
\end{subequations} where \eqref{complementary} is the complementary slackness condition and \eqref{feasibility} is the feasibility condition in the primal and dual domain. Combining \eqref{update:pk} and \eqref{kkt:pk}, we can conclude that $\lambda=0$ if $\sum_{k=1}^K\alpha_k^2/\beta_k\leq P_T$. Otherwise, $\lambda$ is the unique positive solution to
 $$\sum_{k=1}^K\frac{\alpha_k^2}{(\beta_k+\lambda)^2}=P_T.$$

Finally, the variable $\bY$ is updated by solving the following problem with fixed $\u$, $\boldsymbol{\omega},$ and  $\mathbf{p}$: 
\begin{equation}\label{subproblemY}
\begin{aligned}
\min_{\bY}~&\text{tr}(\text{diag}(\mathbf{p})\mathbf{Y}^H\bQ\mathbf{Y})-2\RR\left(\text{tr}(\text{diag}(\mathbf{p})^{\frac{1}{2}}\mathbf{L}\mathbf{Y})\right)\\
\text{s.t.}~~&\|\mathbf{Y}\|_2\leq 1,
\end{aligned}
\end{equation}
where $\bQ=\text{diag}({\boldsymbol{\omega}}\circ\u\circ\u^*)\widehat{\bH}\widehat{\bH}^H$ and $\mathbf{L}=\text{diag}(\boldsymbol{\omega}\circ\u^*)\widehat{\bH}$ with $\widehat{\bH}=[\hat{\h}_1,\hat{\h}_2,\dots,\hat{\h}_K]^H$. Problem \eqref{subproblemY} has a quadratic objective function with a very simple constraint $\|\bY\|_2\leq 1$. To solve this subproblem efficiently, we adopt a projected gradient descent (PGD) method. Denote $\mathcal{Y}:=\{\bY\in\C^{K\times K}\mid\|\bY\|_2\leq 1\}$, which is the spectrum-norm ball that captures the feasible set of \eqref{subproblemY}. 
Following the update rule of PGD,  $\bY$ is updated as 
$$\bY_{t}=\text{Proj}_{\mathcal{Y}}\left(\bY_{t-1}-\eta\mathbf{G}({\bY_{t-1}})\right)$$ at the $t$-th iteration, 
where $\mathbf{G}({\bY}_{t-1})$ is the gradient of the objective function in \eqref{subproblemY} with respect to $\bY$ at point $\bY_{t-1}$, given by 
\begin{equation}\label{GY}
\mathbf{G}(\bY_{t-1})=\bQ\bY_{t-1}\text{diag}(\mathbf{p})-\mathbf{L}^H\text{diag}(\mathbf{p})^{\frac{1}{2}},
\end{equation}
$\eta>0$ is the stepsize, and 
$\text{Proj}_{\mathcal{Y}}(\bar{\bY})$ denotes the projection of $\bar{\bY}\in\C^{K\times K}$ onto set $\mathcal{Y}$ and is obtained by clipping the singular values of $\bar{\bY}$ to $1$ \cite{firstorder}.  Specifically, let $\bar{\bY}=\bU\bD\bV^H$ be the SVD of $\bY$, then
 $$\text{Proj}_{\mathcal{Y}}(\bar{\bY})=\bU\min\{\bD,1\}\bV^H,$$
 where $\min\{\bD,1\}$ denotes the element-wise minimum between $\bD$ and $1$, i.e., each singular value is clipped to at most $1$.
  For a convex problem like \eqref{subproblemY}, the PGD algorithm is guaranteed to converge to its optimal solution if the stepsize $\eta\leq L_G^{-1}$ \cite{}, where $L_G$ is a Lipschitz constant of the gradient $\mathbf{G}(\bY)$. The gradient $\bG(\bY)$ in \eqref{GY} satisfies 
  $$
  \begin{aligned}
  \|\bG(\bY_1)-\bG(\bY_2)\|_F&=\|\mathbf{Q}(\bY_1-\bY_2)\text{diag}(\mathbf{p})\|_F\\
 & \leq \|\bQ\|_2\|\text{diag}(\mathbf{p})\|_2\|\bY_1-\bY_2\|_F\\
 &= \|\bQ\|_2 \max_{k}\{p_k\}\|\bY_1-\bY_2\|_F.
  \end{aligned}
  $$
  Hence, $L_G=\|\bQ\|_2 \max_{k}\{p_k\}$ is a Lipschitz constant of $\bG(\bY)$, and we set $\eta=(\|\bQ\|_2\max_k\{p_k\})^{-1}$ in our algorithm.
  
The above procedure is summarized in Algorithm \ref{WMMSE-LC} and is referred to as the WMMSE-low-complexity (WMMSE-LC) algorithm. We next discuss its complexity and convergence properties. 

\emph{Complexity}: Computing $\widehat{\bH}$ requires a complexity of $\mathcal{O}(NK^2+K^3)$, where $\mathcal{O}(NK^2)$ accounts for computing $\bar{\bH}=\bH{\bH}^H$ and $\mathcal{O}(K^3)$ accounts for computing the square root of $\bar{\bH}$. Similar to Algorithm \ref{WMMSE}, the complexity of updating $\mathbf{u}$, $\boldsymbol{\omega}$, and $\mathbf{p}$ is dominated by calculating $\widehat{\bH}^H\bY$, which requires  $\mathcal{O}(K^3)$ operations.  For the $\bY$-subproblem, the initialization of computing $\bQ$, $\mathbf{L}$ and $\eta$ requires a complexity of $\mathcal{O}(K^3)$, and the per-iteration complexity of PGD algorithm is also $\mathcal{O}(K^3)$. Combining the above, the total complexity of Algorithm \ref{WMMSE-LC} is  $\mathcal{O}(NK^2+K^3\sum_{i=1}^{I_{\text{out}}}I_{\text{inner}}^{(i)}),$ where $I_{\text{out}}$ and $I_{\text{inner}}^{(i)}$ are the number of outer iterations for Algorithm \ref{WMMSE-LC} and the number of inner iterations for solving the $\bY$-subproblem at the $i$-th outer iteration, respectively.  

Compared with Algorithm \ref{WMMSE}, Algorithm \ref{WMMSE-LC} reduces the per-iteration complexity from 
$\mathcal{O}(K^6)$ to $\mathcal{O}(K^3)$. Moreover, the acceleration strategy in Algorithm \ref{WMMSE-LC} that splits the original convex constraint and alternately updates the two coupled variables  $\bY$ and $\mathbf{p}$ does not degrade the performance.  As will be demonstrated in the simulations, this low-complexity algorithm achieves nearly the same performance as Algorithm \ref{WMMSE}, while requiring significantly less CPU time.

\emph{Convergence:} We can get similar convergence results as Theorem \ref{converge1} for Algorithm \ref{WMMSE-LC}, as discussed in the following theorem.  
\begin{theorem}\label{converge2}
Any limit point $(\mathbf{u}^*, \boldsymbol{\omega}^*,\bY^*,\mathbf{p}^*)$ of the sequence generated by Algorithm \ref{WMMSE} is a stationary point of problem \eqref{problem:wmmse2}. Further, $(\bY^*,\mathbf{p}^*)$ is a stationary point of problem \eqref{sumrate4}.  
\end{theorem}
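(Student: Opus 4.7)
The plan is to follow the same template as the proof of Theorem \ref{converge1}, adapting it to the fact that Algorithm \ref{WMMSE-LC} uses a four-block BCD scheme over $(\u,\boldsymbol{\omega},\bY,\mathbf{p})$ and, crucially, that after the change of variables $\bX=\bar{\bH}^{-1/2}\bY\,\text{diag}(\mathbf{p})^{1/2}$ the coupled LMI constraint \eqref{con:X} is split into the two decoupled constraints \eqref{con:Y}--\eqref{con:p2}. Thus the feasible set of \eqref{problem:wmmse2} is the Cartesian product of four simple sets, one per block variable, which is exactly the setting in which classical BCD convergence results apply.

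First I would verify the three standard BCD prerequisites for problem \eqref{problem:wmmse2}: (i) the objective is continuously differentiable in $(\u,\boldsymbol{\omega},\bY,\mathbf{p})$ on the feasible region (with $\boldsymbol{\omega}>\mathbf{0}$ preserved along iterates, as in the proof of Theorem \ref{converge1}); (ii) each block subproblem admits a unique minimizer — the $\u$ and $\boldsymbol{\omega}$ updates are strictly convex quadratics/logarithms with the closed-form minimizers given by the updates analogous to \eqref{uk}--\eqref{omegak}, the $\mathbf{p}$-subproblem \eqref{p-subproblem} is strictly concave in $\sqrt{p_k}$ and its unique optimizer is characterized by the KKT system \eqref{update:pk}--\eqref{kkt:pk}, and the $\bY$-subproblem \eqref{subproblemY} is a convex quadratic over the spectral-norm ball, globally solved by the PGD inner loop with stepsize $\eta=(\|\bQ\|_2\max_k\{p_k\})^{-1}$; and (iii) the iterates remain in a compact set, which follows from $\|\bY\|_2\leq 1$, $\mathbf{1}^T\mathbf{p}\leq P_T$, and the uniform boundedness of $\u$ and $\boldsymbol{\omega}$ already used in Theorem \ref{converge1}. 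Monotone descent of the objective across the four block updates then implies convergence of the objective sequence and, by the Cartesian-product structure of the constraint together with the block-wise uniqueness, every limit point $(\u^*,\boldsymbol{\omega}^*,\bY^*,\mathbf{p}^*)$ satisfies the first-order optimality conditions of \eqref{problem:wmmse2} in each block, and is therefore a stationary point of \eqref{problem:wmmse2}.

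Second, to pass from stationarity of \eqref{problem:wmmse2} to stationarity of \eqref{sumrate4}, I would reuse the WMMSE equivalence argument from the proof of Theorem \ref{converge1}. At the limit, $u_k^*$ and $\omega_k^*$ take their block-optimal closed forms as functions of $(\bY^*,\mathbf{p}^*)$; substituting these expressions back into the stationarity conditions for $\bY$ and $\mathbf{p}$ and using the identity $\omega_k E_k-\log\omega_k$ becomes $-\log(1+\mathrm{SINR}_k)$ (plus a constant) when $u_k$ and $\omega_k$ are set to their optima shows that the resulting first-order conditions on $(\bY,\mathbf{p})$ coincide exactly with the KKT conditions of \eqref{sumrate4}. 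Hence $(\bY^*,\mathbf{p}^*)$ is a stationary point of \eqref{sumrate4}, which is the second claim.

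The main obstacle I anticipate is handling the $\bY$-subproblem rigorously in the BCD analysis: its Hessian $\text{diag}(\mathbf{p})\otimes \bQ$ may be rank-deficient (for instance if some $p_k=0$ at a limit point), so strict convexity of the minimizer in $\bY$ is not automatic, and the spectral-norm constraint $\|\bY\|_2\leq 1$ is non-smooth. I would address this by invoking the inner PGD convergence (the subproblem is convex, so PGD returns a global minimizer, whose optimality in $\bY$ — unique or not — is all that the Grippo–Sciandrone/Tseng BCD convergence theory requires over a product-of-convex-sets feasible region), and, where needed, by an infinitesimal regularization argument mirroring the one used in the proof of Theorem \ref{converge1}. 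With this in place, the full convergence claim follows, and the detailed verification will be deferred to an appendix paralleling Appendix \ref{app:convergence}.
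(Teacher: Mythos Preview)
Your proposal is correct and follows essentially the same route as the paper: verify that Algorithm~\ref{WMMSE-LC} is a BCD method over the Cartesian product $\{\u\}\times\{\boldsymbol{\omega}\}\times\{\|\bY\|_2\le1\}\times\{\mathbf{1}^T\mathbf{p}\le P_T\}$, check differentiability, boundedness of the iterates, and block-wise uniqueness, then invoke a standard BCD convergence result and the WMMSE equivalence (as in \cite[Theorem~3]{wmmse}) for the second claim. The only minor discrepancy is that the paper does not use any ``infinitesimal regularization'' to handle the possibly non-unique $\bY$-block; it simply invokes the version of the BCD theorem requiring uniqueness in only $n-1$ of the $n$ blocks (here $\u$, $\boldsymbol{\omega}$, and $\mathbf{p}$), so your regularization detour is unnecessary.
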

\begin{proof}
See Appendix \ref{app:convergence}.
\end{proof}

\begin{algorithm}[t]
\caption{The WMMSE-LC Algorithm for Solving \eqref{sumrate4}}
\label{alg:wmmse}
\begin{algorithmic}[1]
\STATE \textbf{Initialization:} Initialize $\mathbf{p}=\frac{P_T}{K}\mathbf{1}_K$ and $\bY=\widehat{\bH}/\|\hat\bH\|_2$.
\STATE Set the tolerance of accuracy for the out-loop and inner-loop as $\epsilon_{\text{out}} > 0$ and $\epsilon_{\text{in}}>0$.
\REPEAT
\STATE $\omega_k'=\omega_k$,~$k=1,2,\dots, K$;
    \STATE $u_k=\frac{\hat{\h}_k^H\y_k\sqrt{p_k}}{\sum_{j=1}^K|\hat{\h}_k^H\x_j|^2p_j+\sigma^2},~k=1,2,\dots, K$;
    \STATE $\omega_k=(1-u_k^{*}\hat{\h}_k^H\y_k\sqrt{p_k})^{-1},~k=1,2,\dots, K$;
        \STATE $p_k=\frac{\alpha_k^2}{(\beta_k+\lambda)^2},$ where $\alpha_k=\omega_k\RR(u_k^*\hat{\h}_k^H\y_k)$, $\beta_k=\sum_{j=1}^K\omega_j|u_j|^2|\hat{\h}_j^H\mathbf{y}_k|^2$,~ $k=1,2,\dots, K$, and $\lambda=0$ if $\sum_{k=1}^K\alpha_k^2/\beta_k\leq P_T$ and is the unique positive solution to
 $\sum_{k=1}^K\frac{\alpha_k^2}{(\beta_k+\lambda)^2}=P_T$ otherwise;

    \STATE Compute $\bQ=\text{diag}({\boldsymbol{\omega}}\circ\u\circ\u^*)\widehat{\bH}\widehat{\bH}^H$, $\mathbf{L}=\text{diag}(\boldsymbol{\omega}\circ\u^*)\widehat{\bH}$, and $\eta=\|\bQ\|_2\max_k\{p_k\}$;
    \REPEAT
\STATE $\bY'=\bY$;
    \STATE $\mathbf{G}(\bY)=\bQ\bY\text{diag}(\mathbf{p})-\mathbf{L}^H\text{diag}(\mathbf{p})^{\frac{1}{2}};$
    \STATE $\bar{\bY}=\bY-\eta\mathbf{G}({\bY})$;
    \STATE Compute the SVD of $\bar{\bY}$: $\bar{\bY}=\bU\bD\bV^H$;
    \STATE $\bY=\bU\min\{\bD,1\}\bV^H$.
        \UNTIL{$\|\bY-\bY'\|_F/\|\bY'\|_F \le \epsilon_{\text{in}}$}
\UNTIL{$\frac{\left|\sum_{k=1}^K\log(\omega_k)
      - \sum_{k=1}^K  \log(\omega'_k)\right|}{\sum_{k=1}^K  \log(\omega'_k) }\le \epsilon_{\text{out}}$}
\STATE \textbf{Output:} $\bY$ and $\mathbf{p}$.
\end{algorithmic}
\label{WMMSE-LC}
\end{algorithm}
\section{Simulation Results}\label{sec:simulation}
In this section, we present simulation results to evaluate the performance of the proposed algorithms in Section \ref{sec:4} and to compare the sum-rate of MiLAC-aided beamforming with that of digital and hybrid beamforming. 
\subsection{Comparison of the Proposed Algorithms}
We first compare the algorithms proposed in Section \ref{sec:4}. 
To validate the subspace property discussed in Section \ref{sec:4b}, we apply the WMMSE algorithm to both the original formulation \eqref{sumrate2} and the low-dimensional formulation \eqref{sumrate3}. The related results are labeled as ``WMMSE (full dim)” and ``WMMSE (reduced dim)” in the legend, respectively. We also include the low-complexity algorithm in Algorithm \ref{WMMSE-LC}, which is labeled as ``WMMSE-LC”.

\begin{figure}
\includegraphics[width=0.35\textwidth]{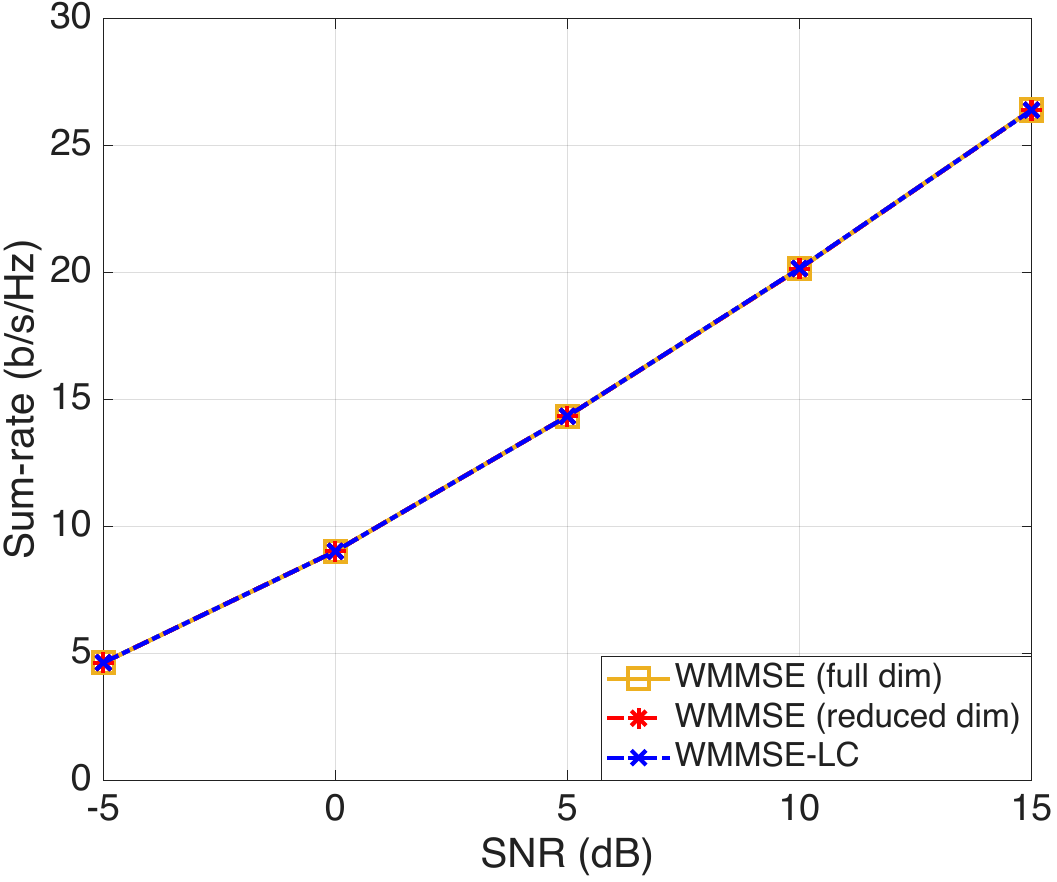}
\centering
\caption{Sum-rate performance for different algorithms, where $N=64$, $K=4$.}
\label{algorithm:sumrate}
\end{figure}
\begin{figure}
\subfigure[CPU time versus $N$, where $K=4$.]{\includegraphics[width=0.48\textwidth]{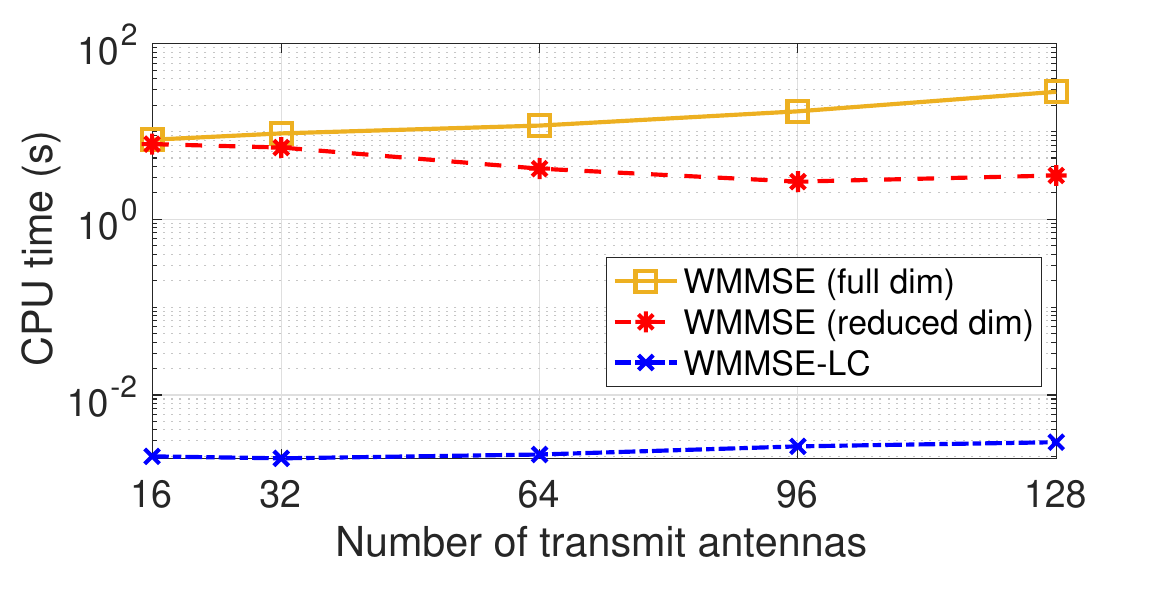}}
\subfigure[CPU time versus $K$, where $N=64$.]{\includegraphics[width=0.48\textwidth]{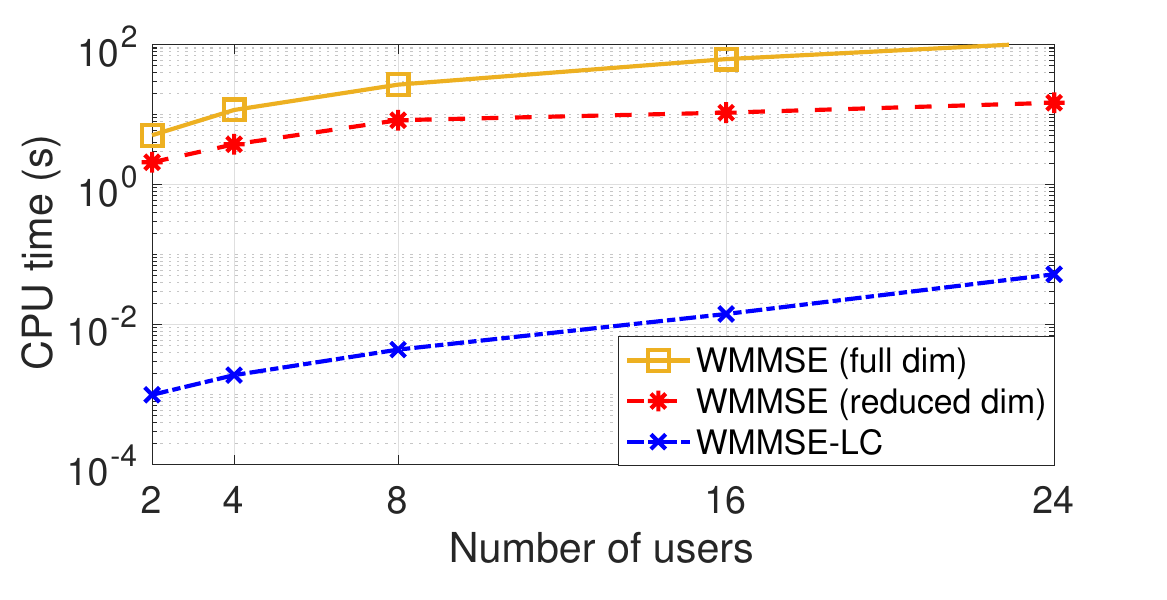}}
\centering
\caption{CPU time for different algorithms, where SNR$=15$\,dB.}
\label{algorithm:CPUtime}
\end{figure}
In Figs. \ref{algorithm:sumrate} and \ref{algorithm:CPUtime}, we compare the sum-rate and CPU time of different algorithms, where the channels are generated as Rayleigh fading, i.e., $\text{vec}(\bH)\sim\mathcal{CN}(\mathbf{0}_{NK\times 1},\mathbf{I}_{NK})$. As shown in the figure, the three algorithms achieve approximately the same performance, which validates the low-dimensional subspace property discussed in Section \ref{sec:4b} and the effectiveness of the low-complexity algorithm in Section \ref{sec:4c}. In terms of computational efficiency, however, these three approaches exhibit significantly different behaviors. As expected, the WMMSE applied to the low-dimensional formulation \eqref{sumrate3} is more efficient than when applied to the original formulation \eqref{sumrate2}, saving around 90\% CPU time when $N$ and $K$ are large. Interestingly, the CPU time of ``WMMSE (reduced dim)'' does not grow with $N$; it even shows a  slight decrease. This is because the computation complexity of the WMMSE algorithm for solving \eqref{sumrate3} is dominated by solving the $(\bW,\mathbf{p})$ subproblem, whose  dimension is irrelevant to $N$. In addition, the algorithm tends to converge faster as $N$ grows, since the problem becomes easier to solve when $\frac{N}{K}$ is large, thereby further reducing  the total CPU time. Compared with these two algorithms, the low-complexity implementation is significantly more efficient because it avoids solving any SDP, making it more favorable in the context of gigantic MIMO.

\subsection{Comparison of Different Beamforming Schemes}

In this subsection, we compare the sum-rate performance of digital, MiLAC-aided, and hybrid beamforming in MU-MISO systems.  
We consider both i.i.d. Rayleigh fading and clustered geometric  channels, in order to evaluate the three beamforming schemes under different propagation conditions. For the clustered geometric channel,  we adopt the model in \cite{hybrid2}:
$$\mathbf{h}_k 
= \sqrt{\frac{N}{L}}
\sum_{\ell=1}^{L} 
\alpha_k^\ell\, 
\mathbf{a}(\phi_{k}^\ell),~~k=1,2,\dots, K,
$$
where $\alpha_k^\ell \sim \mathcal{CN}(0,1)$ is the complex gain of the $\ell$-th path between the BS and user $k$ and $\phi_{k}^\ell \in [0,2\pi)$ is the angle of departure of user $k$. The array response vector is defined as  $
\mathbf{a}(\phi)
= \frac{1}{\sqrt{N}}
\left[
1,\;
e^{j \pi\sin\phi},\;
\dots,\;
e^{j \pi (N-1) \sin\phi}
\right]^{T},
$ which corresponds to a uniform linear array with half-wavelength antenna spacing. In our simulation, we set $L=5$, and randomly generated $\phi_k\in[0,2\pi),~\forall~k$.
\begin{figure}
\subfigure[Rayleigh fading channel.]{\includegraphics[width=0.35\textwidth]{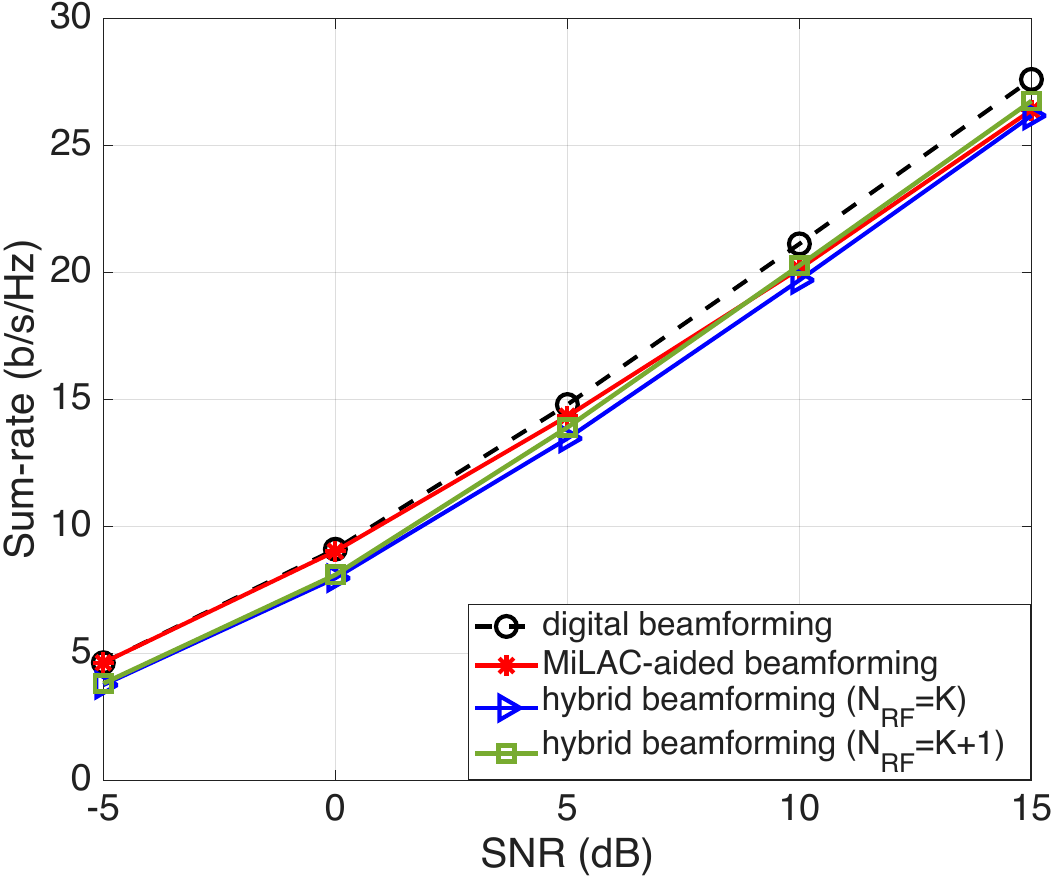}}
\subfigure[Clustered geometric channel.]{\includegraphics[width=0.35\textwidth]{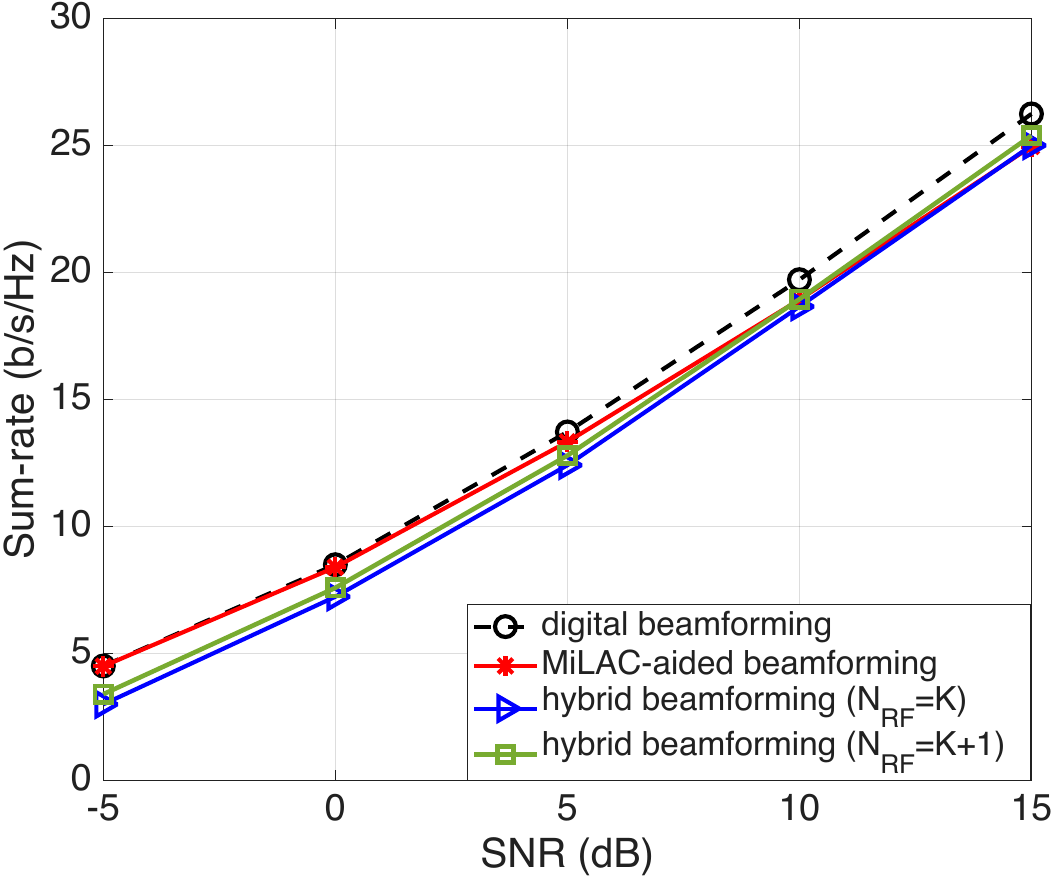}}
\centering
\caption{Sum-rate performance versus SNR for different beamforming schemes, where the result for digital beamforming is obtained by the WMMSE algorithm in \cite{rethinkingwmmse}, the results for hybrid beamforming with $N_{RF}=K$ and $N_{RF}=K+1$ are obtained by approaches in \cite{hybrid1} and \cite{hybrid2}, respectively. The number of transmit antennas and the number of users are $N=64$ and $K=4$, respectively.}
\label{fig:sumrateversussnr}
\end{figure}

Fig. \ref{fig:sumrateversussnr} depicts the sum-rate versus the SNR for different beamforming schemes. For hybrid beamforming, we include the cases with $N_{RF}=K$ and $N_{RF}=K+1$ as baselines to provide a fair comparison with MiLAC-aided beamforming, which requires  $K$ RF chains.
 As shown in the figure, all three beamforming schemes achieve comparable performance. Compared with fully digital beamforming, the MiLAC-aided scheme exhibits a marginal performance degradation. This result is consistent with our conclusion in Section \ref{sec:3b} that MiLAC-aided beamforming cannot offer the full beamforming flexibility of its digital counterpart. Nonetheless, the resulting gap is very small because, under both gigantic  MIMO Rayleigh fading and clustered geometric  channels, the user channel vectors are nearly orthogonal; see Remark~\ref{remark1} and the discussions that follow. Moreover, the performance gap at low SNRs is almost negligible.  This is because at low SNRs, the sum-rate is maximized when full power is allocated to the user with the strongest channel, i.e., only one column of $\bW$ is non-zero. Such a beamforming matrix is achievable by MiLAC-aided beamforming. 
  Compared with hybrid beamforming equipped with $N_{RF}=K$ and $N_{RF}=K+1$ RF chains, MiLAC-aided beamforming attains nearly the same performance in the high-SNR regime, while outperforming both configurations in the low-SNR regime as hybrid beamforming cannot generate arbitrary beamforming vectors due to the limited flexibility offered by phase shifters. 

  
In Fig. \ref{sumrate:wrtN}, we further report the sum-rate of the three beamforming schemes versus the number of transmit antennas. For clarity, we only include hybrid beamforming with $N_{RF}=K$.  In the low-SNR regime (SNR = 0 dB), MiLAC-aided beamforming performs very closely to digital  beamforming and consistently outperforms hybrid beamforming as the number of transmit antennas varies, which further validates the discussions  in the previous paragraph. The performance gap between MiLAC-aided and digital beamforming is larger in the high-SNR regime (SNR = 15 dB), but it gradually shrinks as the number of transmit antennas $N$ increases, since the user channels become more and more orthogonal as $N$ grows. 
 In particular, MiLAC-aided beamforming achieves about $90\%$ of the digital-beamforming sum-rate when $N=16$, and more than $98\%$ when $N=512$ for both channel models.
 
 We can draw the following conclusions regarding digital, MiLAC-aided, and hybrid beamforming. First, MiLAC-aided beamforming cannot achieve exactly the same performance as digital beamforming in multiuser systems. However, the performance gap is very small, especially in the considered gigantic MIMO scenario and in the low-SNR regime. Second, compared with hybrid beamforming, MiLAC-aided beamforming achieves comparable or even superior performance with the same number of RF chains, and has the additional advantages of avoiding symbol-level digital processing, such as per-symbol matrix–vector multiplications, and enabling the use of low-resolution DACs. 
 \begin{figure}
\subfigure[Rayleigh fading channel.]{\includegraphics[width=0.35\textwidth]{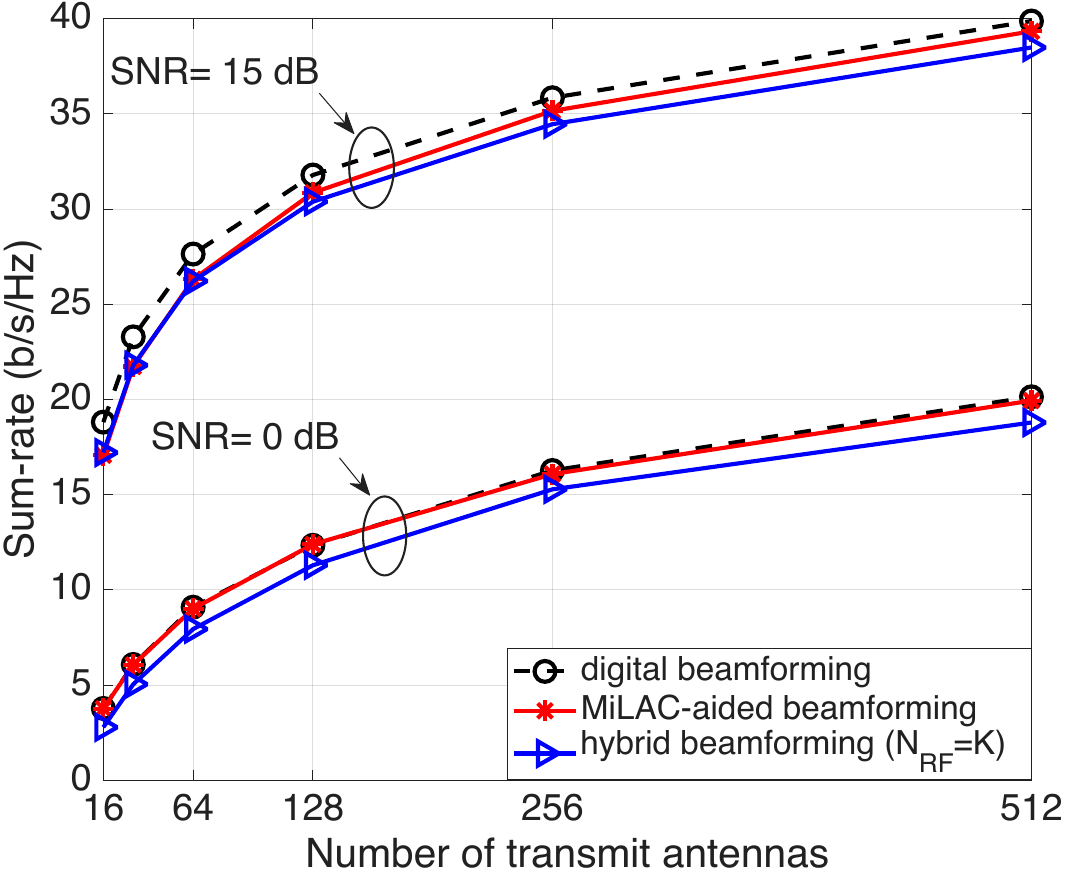}}
\subfigure[Clustered geometric channel.]{\includegraphics[width=0.35\textwidth]{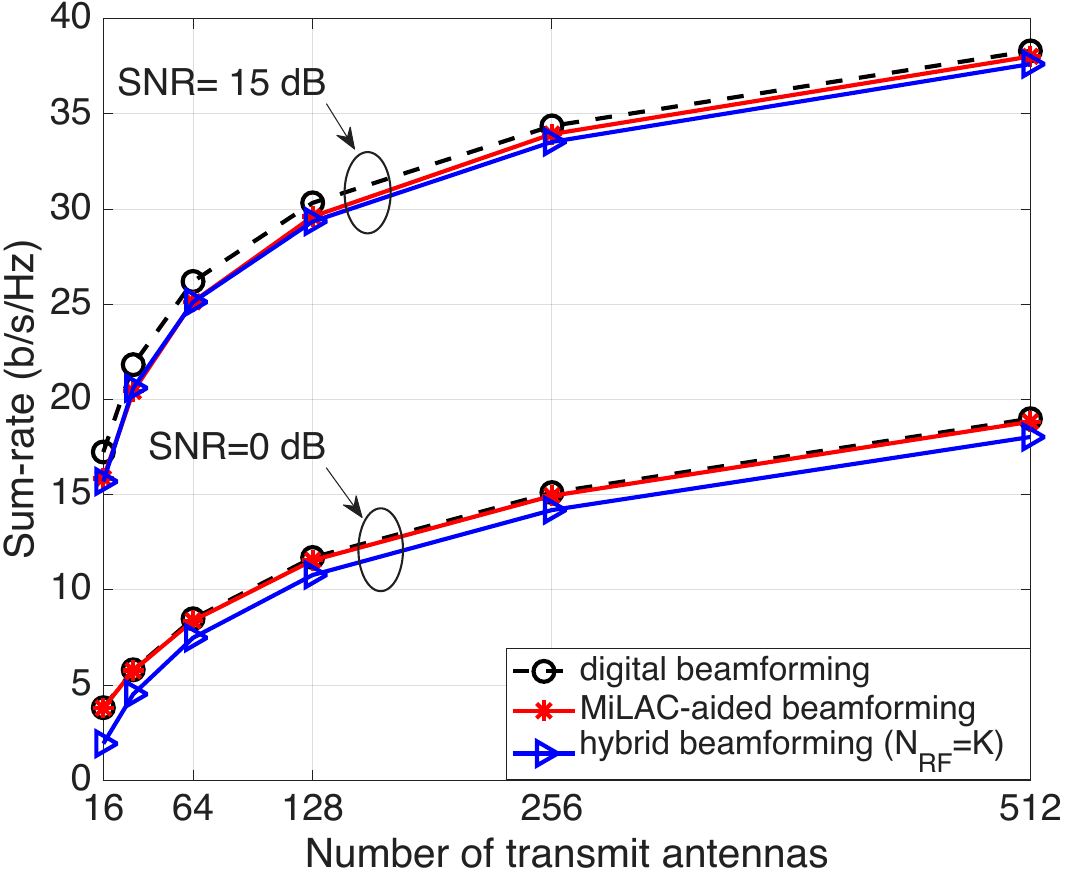}}
\centering
\caption{Sum-rate performance versus $N$ for different beamforming schemes. The number of user is $K=4$.}
\label{sumrate:wrtN}
\end{figure}
\section{Conclusion}\label{sec:conclusion}

This paper investigated the fundamental limits and beamforming design of MiLAC-aided multiuser systems. We provided a complete characterization of the beamforming matrix set achievable by lossless and reciprocal MiLAC, and developed efficient algorithms for MiLAC-aided beamforming design by carefully exploiting the underlying problem structure. Based on the theoretical analysis and simulation results, we draw the following conclusions.

1) MiLAC-aided beamforming does not generally attain the full flexibility of digital beamforming. However, it achieves comparable performance in gigantic MIMO systems.

2) Compared with hybrid beamforming, MiLAC-aided beamforming achieves comparable or superior performance with the same number of RF chains, while additionally avoiding symbol-level digital processing and enabling the use of low-resolution DACs, thereby reducing hardware and computational complexity.

3) MiLAC-based analog beamforming offers greater flexibility than conventional phase-shifter-based analog beamforming, and thus a hybrid digital–MiLAC architecture provides higher flexibility than conventional hybrid beamforming. In particular, it achieves full digital beamforming flexibility with $K$ RF chains, halving that required by conventional hybrid beamforming. 
\appendices
\section{Proof of Proposition \ref{sp_eq}}\label{app:proof:sp_eq}
We first write the KKT conditions of problems \eqref{sumrate2} and \eqref{sumrate3}.  The KKT condition of problem \eqref{sumrate2} is
\begin{subequations}
\begin{align}
&\nabla_{\bW} R(\bW)-\bW\boldsymbol{\Lambda}=\mathbf{0},\label{9a}\\
&{\Lambda}_{k,k}-\mu=0,~k=1,2,\dots, K,\\
&\left<\boldsymbol{\Lambda},\text{diag}(\mathbf{p})-\bW^H\bW\right>=0,~~\boldsymbol{\Lambda}\succeq\mathbf{0},~\bW^H\bW\preceq \text{diag}(\mathbf{p}),\label{9c}\\
&\mu(P_T-\mathbf{1}^T\mathbf{p})=0,~~\mu\geq 0,~\mathbf{1}^T\mathbf{p}\leq P_T,
\end{align}
\end{subequations}
where $R(\bW)=\sum_{k=1}^K\log\left(1+\frac{|\h_k^H\mathbf{w}_{k}|^2}{\sum_{j\neq k}|\h_k^H\bw_{j}|^2+\sigma^2}\right)$ is the sum-rate objective function, $\boldsymbol{\Lambda}$ and $\mu$ are the Lagrange multipliers associated with constraints $\bW^H\bW\preceq\text{diag}(\mathbf{p})$ and $\mathbf{1}^T\mathbf{p}\leq P_T$, respectively.

 Similarly, the KKT condition of problem \eqref{sumrate3} is
\begin{subequations}
\begin{align}
&\bH\nabla_{\bW} R(\bH^H\bX)-\bar{\bH}\bX\boldsymbol{\Lambda}=\mathbf{0},\label{10a}\\
&{\Lambda}_{k,k}-\mu=0,~k=1,2,\dots, K,\label{10b}\\
&\left<\boldsymbol{\Lambda},\text{diag}(\mathbf{p})-\bX^H\bar{\bH}\bX\right>=0,~~\boldsymbol{\Lambda}\succeq\mathbf{0},~\bX^H\bar{\bH}\bX\preceq \text{diag}(\mathbf{p}),\label{10c}\\
&\mu(P_T-\mathbf{1}^T\mathbf{p})=0,~~\mu\geq 0,~\mathbf{1}^T\mathbf{p}\leq P_T.\label{10d}
\end{align}
\end{subequations}
Let $\bZ=\bH\bW$, the sum-rate can be expressed as   
$$
\begin{aligned}
R(\bZ(\bW))=&\log(\|[\bZ]_{k,:}\|_2^2+\sigma^2)\\
&-\log(\|[\bZ]_{k,:}\|_2^2-|Z_{k,k}|^2+\sigma^2).
\end{aligned}$$
 Utilizing the relation that $\nabla_{\bW}R(\bW)=\bH^H\nabla_\bZ R(\bZ)$ and by simple calculation, the gradient $\nabla_{\bW}R(\bW)$ can be expressed as 
$$\nabla_{\bW}R(\bW)=\bH^H\bG(\bH\bW),$$ 
where the mapping $\bG(\cdot): \C^{K\times K}\rightarrow \C^{K\times K}$ is the gradient of $R$ with respect to $\bZ$, given by
$$
\begin{aligned}
[\bG(\bZ)]_{k,:}&=\frac{\bZ_{k,:}}{\log(\|\bZ_{k,:}\|_2^2+\sigma^2)}\\
&-\frac{\bZ_{k,:}-Z_{k,k}\mathbf{e}_k^T}{\log(\|\bZ_{k,:}\|_2^2+\sigma^2-|Z_{k,k}|^2)},~k=1,2,\dots, K.
\end{aligned}$$

Let $(\bW,\mathbf{p},\boldsymbol{\Lambda},\mu)$ be a KKT point of \eqref{sumrate2}, and let $\bX=\bar{\bH}^{-1}\bH\bW$. We next prove that  $(\bX,\mathbf{p},\boldsymbol{\Lambda},\mu)$ is a KKT point of \eqref{sumrate3}. It suffices to prove that \eqref{10a} and \eqref{10c} are satisfied. Clearly, $\Pi_{\mathbf{H}^H}\mathbf{W}=\bH^H\bX. $ Let 
$$\widetilde{\bW}=(\mathbf{I} - \Pi_{\mathbf{H}^H})\bW=\bW-\bH^H\bX\in\ker(\bH).$$
From \eqref{9a}, we have  
$$\nabla_{\bW}R(\bW)-\bW\boldsymbol{\Lambda}=\bH^H\bG(\bH\bW)-\bW\boldsymbol{\Lambda}=\mathbf{0}.$$ Substituting $\bW=\bH^H\bX+\widetilde{\bW}$  and using $\bH\widetilde{\bW}=\mathbf{0}$ yields
$$\widetilde{\bW}\boldsymbol{\Lambda}=\bH^H\bG(\bH\bH^H\bX)-\bH^H\bX\boldsymbol{\Lambda}.$$
Therefore, $\widetilde{\bW}\boldsymbol{\Lambda}\in\ker(\bH)\cap\text{Ran}(\bH^H)$ and is thus a zero matrix.  The condition \eqref{10a} follows immediately.  In addition,  
$$
\begin{aligned}
&\left<\boldsymbol{\Lambda},\text{diag}(\mathbf{p})-\bX^H\bar{\bH}\bX\right>\\
&\overset{(a)}{=}\left<\boldsymbol{\Lambda},\text{diag}(\mathbf{p})-\bW^H\bW\right>+\left<\boldsymbol{\Lambda},\widetilde{\bW}^H\widetilde{\bW}\right>\overset{(b)}{=}0,
\end{aligned}$$
and 
$$\bX^H\bar{\bH}\bX\overset{(c)}{=}\bW^H\bW-\widetilde{\bW}^H\widetilde{\bW}\overset{(d)}{\preceq}\text{diag}(\mathbf{p}),$$
where (a) and (c) apply $\bW=\bH^H\bX+\widetilde{\bW}$, (b) follows from \eqref{9c} and $\widetilde{\bW}\boldsymbol{\Lambda}=\mathbf{0}$, and (d) follows from \eqref{9c}. Therefore, \eqref{10c} is satisfied, which completes our proof for the first result. 

Now let  $(\bX,\mathbf{p},\boldsymbol{\Lambda},\mu)$ be a KKT point of problem \eqref{sumrate3}. We show that $(\bH^H\bX,\mathbf{p},
\boldsymbol{\Lambda},\mu)$ is a KKT point of problem \eqref{sumrate2}. In particular, we only need to show that \eqref{9a} is satisfied.  From \eqref{10a},
$$\bar{\bH}\bG(\bH\bH^H\bX)-\bar{\bH}\bX\boldsymbol{\Lambda}=\mathbf{0}.$$
Multiplying by $\bH^H\bar{\bH}^{-1}$ from the l.h.s. of the above equation gives \eqref{9a}. This completes the proof. 
 \section{Proof of Theorems \ref{converge1} and \ref{converge2}}\label{app:convergence}
Algorithms \ref{WMMSE} and \ref{WMMSE-LC} are inherently BCD algorithms applied to their corresponding optimization problems \eqref{problem:wmmse} and \eqref{problem:wmmse2}, respectively. For both algorithms, the $\boldsymbol{\omega}$- and $\mathbf{u}$-subproblems are strictly convex and thus admit unique solutions. For Algorithm \ref{WMMSE-LC}, the $\mathbf{p}$-subproblem in \eqref{p-subproblem} also has as a unique solution, given in \eqref{update:pk}, since $\lambda\geq 0$ is uniquely determined. In addition, the iterations generated by both algorithms lie in a bounded set. This is because the feasible region of $(\bX,\mathbf{p})$ in problem \eqref{problem:wmmse} (or $(\mathbf{Y},\mathbf{p})$ in problem \eqref{problem:wmmse2}) is compact, which, together with the update rules in \eqref{uk} and \eqref{omegak}, further implies that the iterates of $(\boldsymbol{\omega},\mathbf{u})$ lie in a compact set. In particular, \eqref{omegak} can be rewritten as $$\omega_k=1+\frac{|\bar{\h}_k^H\x_k|^2}{\sum_{j\neq k}|\bar{\h}_k^H\x_j|^2+\sigma^2},$$ which is bounded. Combining the above, we can conclude that both algorithms satisfy: (i) the objective function is differentiable; (ii) at least $n-1$ of the $n$ block subproblems admit unique solutions (the $\boldsymbol{\omega}$ and $\mathbf{u}$ blocks for Algorithm \ref{WMMSE} and the $\boldsymbol{\omega}$, $\mathbf{u}$, and $\mathbf{p}$ blocks for Algorithm \ref{WMMSE-LC});  and (iii) the generated iterations lie in a bounded set. Hence, according to \cite[Theorem 2 (b)]{BCD}, any limit point generated by the algorithms converges to a stationary point.  The second assertion of the theorems can be proved using the same arguments as in \cite[Theorem 3]{wmmse}.
 
   \bibliographystyle{IEEEtran}
\bibliography{IEEEabrv,milac}

 \end{document}